\newcommand{\red}{\color{black}}
\newtheorem{lemma}{Lemma}
\newtheorem{theorem}[lemma]{Theorem}
\newtheorem{proposition}[lemma]{Proposition}
\newtheorem{corollary}[lemma]{Corollary}
\theoremstyle{definition}
\newtheorem{example}[lemma]{Example}
\newtheorem{definition}[lemma]{Definition}
\newtheorem{remark}[lemma]{Remark}
\newcommand{\setR}{\mathbb{R}}
\newcommand{\setT}{\mathbb{T}}
\newcommand{\setS}{\mathbb{S}}
\newcommand{\dd}{\mathrm{d}}
\renewcommand{\Im}{\mathrm{Im}}
\DeclareMathOperator{\Arg}{Arg}
\newcommand{\EM}{F}
\newcommand{\rel}{\text{rel}}
\newcommand{\glob}{\text{glob}}
\DeclareMathOperator{\var}{Var}
\newcommand{\closure}[1]{\mathrm{cl}(#1)}
\begin{document}

\title{Rotation Forms and Local Hamiltonian Monodromy}

\author{K.~Efstathiou}
\email{K.Efstathiou@rug.nl}
\affiliation{Johann Bernoulli Institute for Mathematics and Computer Science,
  University of Groningen, P.O. Box 407, 9700 AK, Groningen, The Netherlands}

\author{A.~Giacobbe}
\email{giacobbe@dmi.unict.it}
\address{Universit\`a di Catania, Dipartimento di Matematica e Informatica, Viale A. Doria 6, 95125 Catania, Italy}

\author{P.~Marde\v{s}i\'c}
\email{mardesic@u-bourgogne.fr}
\affiliation{Universit\'e de Bourgogne, Institut de Math\'ematiques de
  Bourgogne - UMR 5584, UFR Sciences et Techniques, 9, Avenue Alain Savary, BP
  47870, 21078 Dijon CEDEX, France}

\author{D.~Sugny}
\email{dominique.sugny@u-bourgogne.fr}
\affiliation{Laboratoire Interdisciplinaire Carnot de Bourgogne (ICB), UMR 6303 CNRS-Universit\'e de Bourgogne-Franche Comt\'e, 9 Av. A. Savary, BP 47 870, F-21078 Dijon Cedex, France}

\begin{abstract}
  The monodromy of torus bundles associated to completely integrable systems can be computed using geometric techniques (constructing homology cycles) or analytic arguments (computing discontinuities of abelian integrals). In this article we give a general approach to the computation of monodromy that resembles the analytical one, reducing the problem to the computation of residues of polar 1-forms. We apply our technique to three celebrated examples of systems with monodromy (the champagne bottle, the spherical pendulum, the hydrogen atom) and to the case of non degenerate focus-focus singularities, re-obtaining the classical results. An advantage of this approach is that the residue-like formula can be shown to be local in a neighborhood of a singularity, hence allowing the definition of monodromy also in the case of non-compact fibers. This idea has been introduced in the literature under the name of scattering monodromy. We prove the coincidence of the two definitions with the monodromy of an appropriately chosen compactification.
\end{abstract}

\maketitle

\section{Introduction}
\label{sec/intro}

A Liouville-Arnold integrable system is a map $\EM$ (called the \emph{map of integrals of motion} or \emph{integral map}) from a $2n$-dimensional symplectic manifold $M$ to $\mathbb R^n$ such that the components $\EM_j$, $j=1,\dots,n$, of $\EM$ Poisson commute. Let $\mathcal R$ denote a connected component of the set of regular values of $\EM$ and $\mathcal M$ denote a connected component of the preimage $\EM^{-1}(\mathcal R)$. Assuming that the level sets of $\EM$ are compact, the Liouville-Arnold theorem \cite{Arnold1989} states that $\EM: \mathcal M \to \mathcal R$ is a $\mathbb T^n$-bundle over $\mathcal R$. If $\mathcal R$ is not simply connected, then the $\mathbb T^n$-bundle $\EM|_\Gamma$ over a simple closed path $\Gamma$ in $\mathcal R$ may have non-trivial monodromy. Equivalently, there are no smooth action variables throughout $\mathcal R$ \cite{Nekhoroshev1972, Duistermaat1980}. In $n = 2$ degree of freedom systems with a circle action, monodromy can be identified with an integer number. If the number $n$ of degrees of freedom is larger than $2$, then $\mathcal R$ could possibly have non-trivial second cohomology. In that case the Liouville-Arnold integrable system could have global action variables but have non-trivial Chern class or, equivalently, no corresponding global angle variables which together with the action variables give a symplectic chart \cite{Nekhoroshev1972, Duistermaat1980}.

Non-trivial monodromy has been shown to exist in several integrable Hamiltonian systems such as the spherical pendulum~\cite{Duistermaat1980, Cushman2015}, the champagne bottle~\cite{Bates1991}, and the hydrogen atom in crossed electric and magnetic fields~\cite{Cushman2000}. In the mid-90's it was realized that a common property of these systems was the existence of isolated, focus-focus, critical values in the image of $\EM$. The presence of such focus-focus critical values causes a non-trivial fundamental group, $\pi_1(\mathcal R)$, and it turns out that the corresponding $\mathbb T^2$ bundle $\EM|_\Gamma$ over a path $\Gamma$ in $\mathcal R$ encircling the critical value has non-trivial monodromy \cite{Lerman1994, Matveev1996, Zung1997}. This result, now referred to as the \emph{geometric monodromy theorem}, has been further generalized to the non-Hamiltonian context \cite{Zung2002, Cushman2001}.

In this paper we focus on $2$ degree of freedom systems where $F = (H,J)$ are smooth. The function $H$ is the Hamiltonian of a Hamiltonian vector field $X_H$, while $J$ is the momentum of a Hamiltonian $\mathbb S^1$ action whose infinitesimal action is $X_J$. Establishing the non-triviality of monodromy along a closed path $\Gamma$ in such systems is often done through the study of the variation of the \emph{rotation number} along $\Gamma$. We give the definition of the rotation number in Section~\ref{sec/rotation}, see Definition~\ref{rotnum},  where we discuss in detail how the non-trivial variation of the rotation number along $\Gamma$ is equivalent to the non-trivial monodromy of the $\mathbb T^2$ bundle over $\Gamma$. We only note here that the definition of the rotation number is based on a geometric construction but its computation is typically done through the evaluation of an (abelian) integral and the investigation of its dependence upon the values $(h,j)$ of the integrals of motion. Moreover, the variation of the rotation number has been used to describe fractional monodromy \cite{Efstathiou2007, Sugny2008} and to define scattering monodromy \cite{Bates2007a}.

In the present work we relate the proofs of the non-triviality of monodromy based on the variation of the rotation number to a more geometric approach. In particular, we formalize an analytical computation of the rotation number through the notion of \emph{rotation 1-form} (Definition \ref{defrotform}), a closed 1-form whose integral over suitably defined orbit-segment of $X_H$ gives the rotation number up to a term which we prove to be unimportant for the variation. Moreover, we show that the variation is independent of the choice of the rotation 1-form provided that the latter satisfies a transversality condition (Definition \ref{deftrans}).

It turns out that a rotation 1-form cannot be defined in the whole phase-space, but it must necessarily be singular on a subset, whose points we call \emph{poles}. Such subset is essential for the non-triviality of monodromy. In all examples known to the authors, the set of poles is a 2-dimensional submanifold intersecting $\EM^{-1}(\Gamma)$ at a finite number of $X_J$-orbits, cf.\ Section~\ref{sec/examples}. The main result in this article is the following theorem relating the analytic computation of the variation of the rotation number to the geometry of the set of poles of the rotation 1-form.

\begin{theorem}\label{thm/main} Consider a two-degree of freedom integrable Hamiltonian system $F$, such that the fibers of $F$ are compact and connected. Consider a closed path $\Gamma$ in the set of regular values of $F$ and assume that there is a neighborhood $U$ of $F^{-1}(\Gamma)$ where $F$ is invariant under a Hamiltonian $\mathbb{S}^1$ action generated by a momentum $J$. Let $\vartheta$ be a rotation 1-form for the vector field $X_J$, transversal to $F$, and let $\Pi$ be its polar locus, which we assume two-dimensional. Further, assume that $\Gamma$ transversally intersects $F(\Pi)$ at a finite number of values $v_i$. Then the poles of the rotation 1-form in $F^{-1}(\Gamma)$ are a disjoint union of a finite number of $X_J$-orbits $\mathbb S^1 p_{ij} \in F^{-1}(v_i)$, which we call \emph{polar orbits}, and the monodromy number $k$ along $\Gamma$, see Eq.~\eqref{varTheta}, is given by
    \begin{equation}\label{k}
      k = \frac{1}{2\pi} \sum_{ij} \int_{\delta_{ij}} \vartheta,
    \end{equation}
    where $\delta_{ij}$ is a loop in $F^{-1}(\Gamma)$ surrounding $\mathbb S^1 p_{ij}$ with appropriate orientation, see Figure~\ref{fibration}.
\end{theorem}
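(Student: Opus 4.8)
The plan is to reduce the statement to a residue computation that mirrors the analytic evaluation of the discontinuity of an abelian integral. First I would invoke Eq.~\eqref{varTheta} to replace the monodromy number by the total variation of the rotation number around $\Gamma$, so that it suffices to prove $\Delta_\Gamma\Theta=\sum_{ij}\int_{\delta_{ij}}\vartheta$, where $\Delta_\Gamma\Theta$ denotes the increment of the rotation number after one turn along $\Gamma$. Using Definition~\ref{defrotform} together with the previously established fact that the variation is insensitive to the ``unimportant'' additive term, I would write the rotation number at a regular value $v\in\Gamma$ as $\Theta(v)=\int_{\sigma_v}\vartheta$, where $\sigma_v\subset F^{-1}(v)$ is the $X_H$-orbit segment entering the definition. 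The whole problem then becomes: how does the line integral $\int_{\sigma_v}\vartheta$ change as the contour $\sigma_v$ is dragged once around $\Gamma$?

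Next I would assemble the segments into a single $2$-chain $S=\bigcup_{v\in\Gamma}\sigma_v$ inside the $3$-manifold $F^{-1}(\Gamma)$, a $\setT^2$-bundle over $\Gamma$. A dimension count shows that $S$, being $2$-dimensional, meets each $1$-dimensional polar orbit $\setS^1 p_{ij}$ in isolated points; the transversal intersection of $\Gamma$ with $F(\Pi)$ guarantees that these punctures are finite in number, that they occur precisely at the crossing values $v_i$, and that a small loop $\delta_{ij}$ encircling $\setS^1 p_{ij}$ is well defined with a coorientation fixed by the direction of travel along $\Gamma$. Away from the punctures $\vartheta$ is smooth and closed, so the first variation of $\int_{\sigma_v}\vartheta$ can be computed by Cartan's formula $\mathcal{L}_V\vartheta=\dd(\iota_V\vartheta)+\iota_V\,\dd\vartheta$, with $V$ the deformation field of the family: the term $\iota_V\,\dd\vartheta$ vanishes because $\dd\vartheta=0$ off $\Pi$, and the remaining exact term integrates to the values of $\iota_V\vartheta$ at the two endpoints of $\sigma_v$. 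Writing $p(v)$ and $q(v)$ for the initial and final points of $\sigma_v$, both $v\mapsto p(v)$ and $v\mapsto q(v)$ close up after one turn of $\Gamma$ (the return point $q(v)$ lies on the $X_J$-orbit of $p(v)$, which has period $2\pi$, so $q$ returns even though the accumulated angle $\Theta$ need not), whence these endpoint contributions cancel and the entire variation is concentrated at the pole crossings.

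The heart of the argument is therefore the local analysis at each puncture, and this is the step I expect to be the main obstacle. As $v$ crosses a value $v_i$ the contour $\sigma_v$ sweeps transversally across the point where $\setS^1 p_{ij}$ pierces $S$; near that point $\vartheta$ behaves like an angular form around the pole, so $\int_{\sigma_v}\vartheta$ undergoes a jump. I would quantify this jump by applying Stokes' theorem to the thin surface bounded by $\sigma_{v_i^-}$, $\sigma_{v_i^+}$ and the connecting endpoint arcs, with a small tube around $\setS^1 p_{ij}$ excised: since $\dd\vartheta=0$ on the excised region the flux through it is zero, and the boundary integral reduces to $\int_{\delta_{ij}}\vartheta$. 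This is precisely the residue-like contribution, the exact analogue of the $2\pi\ii$ jump of $\int\dd z/z$ as the contour is pushed across the origin, and the delicate points are the bookkeeping of signs and the verification that the coorientation of $\delta_{ij}$ matches the sense in which $\sigma_v$ crosses $\Pi$.

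Finally I would sum the contributions: each smooth stretch of $\Gamma$ contributes nothing, each crossing of $F(\Pi)$ contributes the residue $\int_{\delta_{ij}}\vartheta$, and summing over all polar orbits gives $\Delta_\Gamma\Theta=\sum_{ij}\int_{\delta_{ij}}\vartheta$; dividing by $2\pi$ yields Eq.~\eqref{k}. Throughout, the transversality condition of Definition~\ref{deftrans} and the transversal intersection of $\Gamma$ with $F(\Pi)$ are what keep the punctures finite and the loops $\delta_{ij}$ unambiguous, so I would indicate explicitly where each hypothesis enters.
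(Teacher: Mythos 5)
Your proposal follows the same skeleton as the paper's proof (Section~\ref{sec/main-proof}): reduce $k$ to a variation statement via Eq.~\eqref{varTheta} and Lemma~\ref{rotation angle and rotation 1-form integral}, sweep the $X_H$-orbit segments into a $2$-chain (the paper's cylinder $C$) over $\Gamma$, and convert pole crossings into residues $\int_{\delta_{ij}}\vartheta$ by a punctured Stokes argument; that local jump-equals-residue step is sound. The genuine gap is the claim that, since $v\mapsto p(v)$ and $v\mapsto q(v)$ close up after one turn, ``these endpoint contributions cancel and the entire variation is concentrated at the pole crossings.'' Closing up does not give cancellation: $\vartheta$ is closed but not exact, and the two endpoint curves are closed loops on the torus $K=\{\,\mathbb S^1\sigma(v): v\in\Gamma\,\}$ lying in \emph{different} homology classes. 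As in the paper, $q(\Gamma)=\chi(\sigma(\Gamma))$ is homologous in $K$ to $\sigma(\Gamma)+\ell\,[\mathbb S^1]$, so
\begin{equation*}
  \int_{q(\Gamma)}\vartheta-\int_{p(\Gamma)}\vartheta=\ell\int_{\mathbb S^1}\vartheta=2\pi\ell,\qquad \ell=-k,
\end{equation*}
i.e.\ the term you discard equals $-2\pi k$: the smooth stretches carry the monodromy, they do not contribute nothing.

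This is not a cosmetic slip, because it makes your accounting internally inconsistent. Since the family of contours $\sigma_v$ returns to itself after one turn, the total change of $\Phi(v)=\int_{\sigma_v}\vartheta$ around $\Gamma$ (smooth evolution plus jumps) is exactly zero; so if the smooth part really vanished, the jumps, and with them $\sum_{ij}\int_{\delta_{ij}}\vartheta$, would sum to zero and your argument would prove $k=0$ for every system, contradicting the focus-focus case $k=-1$. The consistent bookkeeping is: the jumps sum to $\sum_{ij}\int_{\delta_{ij}}\vartheta$ (your local Stokes step), the smooth part sums to $2\pi\ell=-\sum_{ij}\int_{\delta_{ij}}\vartheta$ (the homology computation above), and $k$ equals $\frac{1}{2\pi}\times(\text{sum of jumps})$ by Definition~\ref{def/var} together with Eq.~\eqref{varTheta}. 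Note also that the ``increment after one turn'' of the continuously continued rotation number is the \emph{smooth} part, $-2\pi k$, not the sum of jumps, $+2\pi k$; conflating these two quantities is what lets your final formula come out right despite the false intermediate claim. To repair the proof, delete the cancellation claim and replace it by the computation of $\int_{\partial C}\vartheta$ through the homology class of $\chi(\sigma(\Gamma))$ in $H_1(K,\mathbb Z)$ — which is precisely the step the paper's proof performs.
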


\begin{remark}
  Theorem \ref{thm/main} applies to any torus bundle, provided that a Hamiltonian circle action, leaving $F$ invariant, is defined in a neighborhood of the torus bundle. The Theorem reduces the problem of computing the variation of the rotation number to that of integrating the rotation $1$-form $\vartheta$ along closed paths encircling the poles of $\vartheta$. A method for constructing the rotation $1$-form $\vartheta$ is given in Lemma \ref{lem-rot}. The integrals $\frac{1}{2\pi} \int_{\delta_{ij}} \vartheta$ are real analogues of residues for the rotation $1$-form around its set of poles. This is strongly reminiscent of the complex analytic approach of Ref.~\citenum{Sugny2008}, where the variation of the rotation number is expressed as the integral around the pole(s) of a meromorphic $1$-form.
\end{remark}

\begin{figure*}
\includegraphics[width=0.75\textwidth]{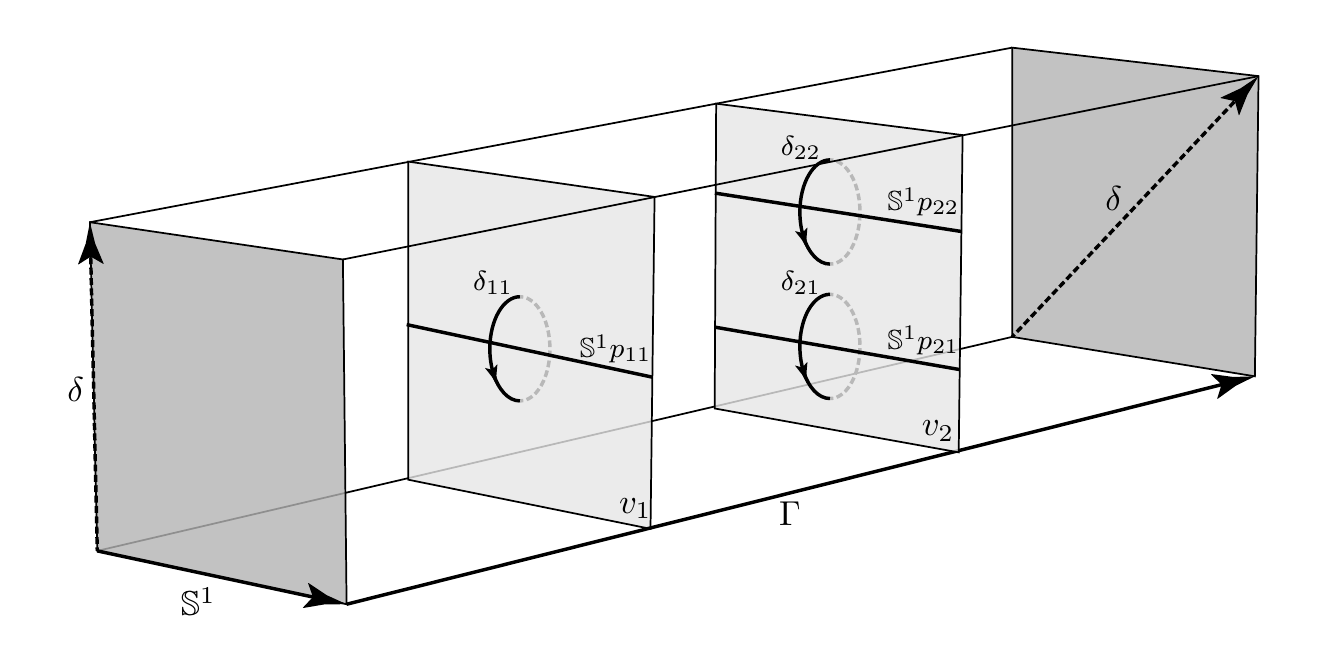}
\caption{The fibration $\EM$ above $\Gamma$. The circle $\Gamma$ and the fibers $\EM^{-1}(v) \simeq \mathbb T^2$ are unfolded for easier presentation. Polar orbits $\mathbb S^1 p_{ij}$ and integration loops $\delta_{ij}$ are shown for fibers $\EM^{-1}(v_i)$, see Theorem~\ref{thm/main}. The fiber $\EM^{-1}(\Gamma(0)) = \EM^{-1}(\Gamma(1))$ is represented by the dark gray surfaces. One should pay attention to that the identification of $\EM^{-1}(\Gamma(0))$ and $\EM^{-1}(\Gamma(1))$ is not be the one implied by this unfolded representation of $\EM^{-1}(\Gamma)$ when the monodromy is non-trivial. To highlight this we draw a representative of a homology cycle $\delta$ on $F^{-1}(\Gamma(0))$ and a possible representative of the \emph{same} cycle on $F^{-1}(\Gamma(1))$.}
\label{fibration}
\end{figure*}

The local form of an integrable Hamiltonian system in a neighborhood of a focus-focus critical point allows to apply Theorem~\ref{thm/main} and obtain the following well known fact.

\begin{corollary}
  Let $p$ be a focus-focus critical point of $F$ and $\Gamma$ a simple closed path in the set of regular values of $F$, such that $p$ is the only critical point in $\EM^{-1}(D)$, where $D$ is the set bounded by $\Gamma$. Then the monodromy number along $\Gamma$ is $k=-1$.
\end{corollary}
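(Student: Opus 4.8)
The plan is to reduce $F$ to its local normal form near $p$, write down an explicit rotation $1$-form whose polar locus projects to a ray issuing from the critical value, and then read the monodromy directly off the residue formula \eqref{k} of Theorem~\ref{thm/main}. First I would invoke the symplectic normal form of a nondegenerate focus-focus point: there are Darboux coordinates $(x_1,x_2,y_1,y_2)$ on a neighborhood of $p$, with $\omega = \dd x_1\wedge \dd y_1 + \dd x_2 \wedge \dd y_2$, such that, writing $z_1 = x_1 + \ii x_2$ and $z_2 = y_1 + \ii y_2$, the $\mathbb{S}^1$-momentum is $J = x_1 y_2 - x_2 y_1$ and $H = x_1 y_1 + x_2 y_2$, so that $\zeta := H + \ii J = \bar{z}_1 z_2$. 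In these coordinates $X_J$ generates the diagonal rotation $(z_1,z_2)\mapsto(e^{\ii t}z_1, e^{\ii t}z_2)$ and $X_H$ the hyperbolic flow $(z_1,z_2)\mapsto(e^{t}z_1, e^{-t}z_2)$; the only critical value inside $\Gamma$ is $\zeta = 0$, while the fibers over $\Gamma$ correspond to $\zeta \neq 0$.

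Next I would choose, for a fixed $\lambda \in \setC\setminus\{0\}$, the rotation $1$-form
\[ \vartheta = \dd \arg(z_1 - \lambda z_2) = \Im\,\dd\log(z_1-\lambda z_2). \]
Since $z_1 - \lambda z_2 \mapsto e^{\ii t}(z_1 - \lambda z_2)$ under the flow of $X_J$, one has $\vartheta(X_J) = 1$, and $\vartheta$ is closed and $X_J$-invariant, so it is a genuine rotation $1$-form for $X_J$ in the sense of Lemma~\ref{lem-rot}. Its polar locus is the $X_J$-invariant plane $\Pi = \{z_1 = \lambda z_2\}$, which is two-dimensional; on $\Pi$ one computes $\zeta = \bar{z}_1 z_2 = \bar\lambda\,|z_2|^2$, so that $F(\Pi)$ is the ray $R = \{\arg\zeta = -\arg\lambda\}$ emanating from the critical value. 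Thus $F|_\Pi$ has rank one, which is exactly the transversality required in Definition~\ref{deftrans}. Note that the naive choices $\arg z_1$ or $\arg z_2$ (formally $\lambda = 0$ or $\lambda = \infty$) are excluded precisely because their polar loci collapse onto the critical value $\zeta = 0$ and therefore fail transversality.

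Because $\Gamma$ is a simple loop around $\zeta = 0$ and $R$ is a ray from $0$ to infinity, $\Gamma$ meets $F(\Pi) = R$ transversally with algebraic intersection number one, which for a generic $\Gamma$ means a single value $v_1 \in \Gamma$. The poles of $\vartheta$ over $v_1$ form the single $X_J$-orbit $\mathbb{S}^1 p_{11} = \{z_1 = \lambda z_2,\ |z_2|^2 = |v_1|/|\lambda|\}$. Taking $\delta_{11}$ to be a small loop in $F^{-1}(\Gamma)$ encircling this orbit once and writing $w = z_1 - \lambda z_2$, the integral $\int_{\delta_{11}}\vartheta = \int_{\delta_{11}}\dd\arg w$ equals $2\pi$ times the winding number of $w$ about $0$ along $\delta_{11}$, whence $\frac{1}{2\pi}\int_{\delta_{11}}\vartheta = \pm 1$. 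Theorem~\ref{thm/main} then gives $k = \pm 1$, and fixing orientations yields $k = -1$. The answer is manifestly independent of $\lambda$, consistent with the stated independence of the construction on the choice of transversal rotation form.

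The hard part will be the sign. The orientation of $\delta_{11}$ is dictated by the conventions of Theorem~\ref{thm/main} and Figure~\ref{fibration}, that is, by the co-orientation of $R$ coming from the direction in which $\Gamma$ crosses it together with the orientation of the $\mathbb{S}^1$-orbit; tracking these through the normal form to certify that the winding number is $-1$ rather than $+1$ is the delicate bookkeeping. A secondary point to verify is that the semi-local Hamiltonian $\mathbb{S}^1$-action given by $J$ extends over all of $F^{-1}(\Gamma)$, so that the hypotheses of Theorem~\ref{thm/main} (compact connected fibers, a circle action near $F^{-1}(\Gamma)$, and a transversal rotation form meeting $\Gamma$ in finitely many values) are genuinely satisfied.
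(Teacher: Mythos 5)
Your overall strategy --- Eliasson normal form near $p$, an explicit transversal rotation $1$-form, and the residue formula of Theorem~\ref{thm/main} --- is the same as the paper's (Section~\ref{sec/local}), and the one-parameter family $\vartheta_\lambda = \dd\arg(z_1-\lambda z_2)$ is a nice variant of the paper's $\dd\theta_1$. But there are two genuine gaps. First, $\vartheta_\lambda$ is defined only on the normal-form chart around $p$, whereas Theorem~\ref{thm/main} requires a rotation $1$-form defined (away from its two-dimensional polar set) on a neighborhood of \emph{all} of $F^{-1}(\Gamma)$: its proof integrates $\vartheta$ over full $X_H$-orbit segments that wind around the tori $F^{-1}(v)$, most of which lie outside the chart. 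So the theorem cannot be invoked as you state it. You must either extend $\vartheta_\lambda$ to such a neighborhood and then account for all of its polar orbits there --- and these need not reduce to the single orbit you found: in the spherical pendulum the natural rotation form has three polar orbits over $\Gamma$, two of whose contributions cancel --- or prove a localization statement saying that the variation of the integral over just the orbit segments inside a small invariant ball $B$ already computes $k$. The paper does the latter (Proposition~\ref{local}, which establishes $\var_\Gamma \Phi_\rel = -2k\pi$), and this is the actual content of Section~\ref{sec/local}; your closing remark that one should check that the $\mathbb S^1$-action extends over $F^{-1}(\Gamma)$ misses the point, since $J$ is globally defined in the paper's setting --- it is the rotation form, not the action, whose domain is the obstacle.

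Second, the sign. The statement to be proved is $k=-1$, not $|k|=1$; as written, your argument yields only $\frac{1}{2\pi}\int_{\delta_{11}}\vartheta_\lambda = \pm 1$, and you explicitly defer the orientation bookkeeping. That bookkeeping is the crux, not an afterthought, and it is exactly what the paper spends Section~\ref{sec/1m1} on: the orientation of the cylinder $C$ of $X_H$-orbit segments is fixed by the parameterization $(s,t)$ with $s$ increasing along $\Gamma$ and $t$ the $X_H$-time, the restriction of the rotation form to $C$ is written in coordinates $(w,h)$ whose orientation is matched to $(s,w)$, and the residue then comes out as $\int_\delta \frac{h\,\dd w - w\,\dd h}{h^2+w^2} = -2\pi$, giving $\var_\Gamma\Phi_\rel = 2\pi$ and hence $k=-1$; the determinant computations $D=-1$ play the same role in the worked examples. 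Until you carry out this step in your coordinates (relating the orientation of $\delta_{11}$ to the traversal direction of $\Gamma$ and the $X_H$-flow, not merely to an ad hoc co-orientation of the ray $R$), your proof establishes non-triviality of the monodromy but not the claimed value.
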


The main contribution of this paper does not lie in the computation of the monodromy, but in the systematic approach to monodromy through the variation of the rotation number and the expression of the latter as the integral of a rotation $1$-form. More specifically, the monodromy number is given by the sum of the integral of the rotation $1$-form along the cycles $\delta_{ij}$ described in Theorem~\ref{thm/main}. Applying this approach to the case of focus-focus points yields as a consequence that the Hamiltonian monodromy relies only on the local structure of the foliation in a neighborhood of such points (cf.~similar \emph{local} approaches in Ref.~\citenum{Schmidt2010} and Ref.~\citenum{Vu-Ngoc1998}). We preferred to present here the method in the easiest case of single focus-focus points and plan to apply it to more complicated cases, where additional difficulties appear, in a forthcoming work. In particular, we plan to deal with cases of non-isolated singularities such as the $(m:n)$-resonance case. In such general cases, more complicated contribution given by Picard-Lefschetz formula can appear.

Moreover, understanding how monodromy is locally determined in the case of single focus-focus points permits a generalization of the notion of monodromy to completely integrable Hamiltonian systems \emph{having not necessarily compact fibers}, avoiding the, frequently artificial, compactification of the fibers by adding suitable higher order terms to the Hamiltonian. We compare our local approach to monodromy based on the rotation $1$-form to the notion of \emph{scattering monodromy} introduced in Ref.~\citenum{Bates2007a}. We show that the two concepts are similar, and we highlight the role played by the identification of incoming and outgoing asymptotic directions in scattering monodromy.

The plan of the paper is as follows. In Section~\ref{sec/rotation} we give the definition of the rotation number and describe how the non-triviality of its variation is related to the non-triviality of monodromy. Then, we introduce rotation 1-forms and we show how they can be used to determine the variation of the rotation number. In Section~\ref{sec/examples} we give several examples of rotation 1-forms in specific examples of integrable Hamiltonian systems. In Section~\ref{sec/local} we study focus-focus singularities and show that the variation of the rotation number can be computed through the variation of an appropriate integral of a locally defined rotation $1$-form.
In Section~\ref{sec/scattering} we define monodromy for non-compact fibrations and relate our results to scattering monodromy. We draw conclusions and give perspectives in Section~\ref{sec/discussion}.

\section{Monodromy and Rotation Number}
\label{sec/rotation}

As stated in the Introduction, in this work we restrict our attention to 2 degree of freedom integrable systems ($n=2$) under the very typical hypothesis that one of the integrals of motion is a function $J$ which is the momentum of a circle action $\mathbb S^1 \times \mathcal M \to \mathcal M$, $(t,p) \to s \cdot p$, with $t \in [0,2\pi]$ (where $\mathcal M$ is possibly an open subset of the phase space). The other integral of motion is an $\mathbb S^1$-invariant function $H$ typically called Hamiltonian or energy of the system. For this reason, the map $\EM=(H,J)$ is often called \emph{energy-momentum map}.

Consider a closed path $\Gamma$ in the set $\mathcal R$ of regular values of $\EM$ and the $\mathbb T^2$-bundle $\EM^{-1}(\Gamma) \overset{\EM}{\to} \Gamma$. The monodromy of the $\mathbb T^2$-bundle is an automorphism of $H_1(\EM^{-1}(v)) \simeq \mathbb Z^2$ for any $v$ in the image of $\Gamma$. Fixing a basis of $H_1(\EM^{-1}(v))$, monodromy is then characterized by a matrix $M \in \mathrm{SL}(2, \mathbb Z)$. On each fiber $\EM^{-1}(v)$, $v \in \mathcal R$, the existence of the $\mathbb S^1$-action gives a globally defined generator $\gamma_J$ of $H_1(\EM^{-1}(v))$. In a basis $\{\gamma_J,\gamma\}$ of $H_1(\EM^{-1}(v))$ the monodromy matrix has the form
\begin{align*}
  M =
  \begin{pmatrix}
    1 & k \\ 0 & 1
  \end{pmatrix}, \quad k \in \mathbb Z.
\end{align*}
The number $k$ is called the \emph{monodromy number} and completely determines the topology of the $\mathbb T^2$-bundle $\EM^{-1}(\Gamma) \overset{\EM}{\to} \Gamma$. Therefore, the computation of monodromy boils down to the computation of the value of $k$.

In this section we review the computation of monodromy through the variation of the rotation number. We first recall the definition of the rotation number and how it can be used to construct local action coordinates.

\subsection{Rotation Number and its Variation}

In our setting, the momentum $J$ of the $\setS^1$-action can be taken as an action coordinate $I_1$ for the system. A second action coordinate can be constructed in the following way. Consider a point $p$ in a regular fiber $\EM^{-1}(v) \simeq \setT^2$. Furthermore, let $\mathbb S^1\,p$ be the closed orbit of $X_J$ going through $p$. The orbit $\gamma_H(p)$ of $X_H$ starting at $p$ will cross again $\mathbb S^1 p$ at a point $p'$ after a time $T(p)$, called \emph{first return time}, giving a smooth function $T: F^{-1}(\mathcal R) \to \mathbb R$.

\begin{definition}\label{rotnum} The \emph{rotation number} $\Theta(p)$ is the minimal positive time it takes to flow with $X_J$ from $p$ to $p'$.
\end{definition}

The rotation number is a function defined in $\mathcal M$ and taking values in $[0,2\pi)$. With our definition, $\Theta$ is smooth outside its zero level-set $\mathcal Z = \{\, p \in \mathcal M \,|\, \Theta(p) = 0 \,\}$ but is possibly discontinuous at $\mathcal Z$. The set $\mathcal Z$ is typically a union of codimension-1 surfaces in $\mathcal M$ and the function $\Theta$ can possibly tend smoothly to zero from one side and smoothly to $2\pi$ from the other.

Both, the first return time and the rotation number, are invariant under the flows of $X_J$ and $X_H$, and hence are constant on the connected components of the level sets of $\EM$. It follows that they are the pull-back via $\EM$ of functions defined on $\mathcal R$. With a little abuse of notation we will denote the rotation number and the first return time with the same name may they be defined in $\mathcal M$ or in $\mathcal R$. The vector field defined as
\begin{align}
  X_{I_2}=\frac{1}{2\pi} (- \Theta X_J + T X_H)
\end{align}
can be shown to be Hamiltonian and $2\pi$-periodic \cite{Duistermaat1980, Cushman2015}. It is hence associated to the second action coordinate $I_2$ wherever the function $\Theta$ is smooth, that is, outside the set $F(\mathcal Z) \subset \mathcal R$. Note that one can locally define a smooth action coordinate $I_2$ also at $\mathcal Z$ by adding, in a subset of the local neighborhood, an appropriate integer multiple of $2\pi$ to $\Theta$ so as to obtain a locally smooth function.

One of the most important singularities of the map $\EM$, the focus-focus singularity, consists of an isolated point $\bar p$ that is mapped by $\EM$ onto a point $\bar v$ which is a puncture in $\mathcal R$. In this case the zero-set $\mathcal Z$ of $\Theta$ locally consists of curves converging to $\bar v$, typically spiraling around $\bar v$ \cite{Dullin2004a}. Considering a path $\Gamma$ that surrounds such singular value one can add the jumps of $\Theta$ across such curves and obtain an integer multiple of $2\pi$.

\begin{remark}
  Instead of the rotation number $\Theta$ we could have used the \emph{rotation angle} $\widetilde \Theta$, a circle-valued function obtained by composing $\Theta$ with the projection from $\mathbb R$ to $\mathbb R / 2\pi \mathbb Z$. The map $\widetilde \Theta$ can be shown to be smooth, while $\Theta$ can have first-kind discontinuities with jump equal to $\pm 2\pi$.
\end{remark}

The integer obtained by adding up the discontinuities of $\Theta$ along $\Gamma$, and dividing by $2\pi$, reveals the non-triviality of the $\mathbb T^2$-bundle over $\Gamma$. It is connected to the non-existence of global action coordinates and we call it the variation of $\Theta$ along $\Gamma$. We formalize the notion of the variation of an $\mathbb R$-valued function along $\Gamma$ as follows.

\begin{definition} \label{def/var}
Let $g : \Gamma \simeq \setS^1 \to \setR$ be a function with a finite number of discontinuities $p_1,...,p_k \in \Gamma$, whose jumps across the discontinuities are respectively the real numbers 
\begin{align*}
  d_j = \lim_{\varepsilon\to 0^+} \left(g(p_j + \varepsilon)-  g(p_j - \varepsilon)\right),\quad j=1,\dots,k.
\end{align*}
The \emph{variation of $g$ along $\Gamma$} is then defined as
\begin{align*}
  \var_\Gamma g = - \sum_j d_j.
\end{align*}
\end{definition}

\begin{example}
  Consider the function $g : \mathbb S^1 \to \mathbb R$ given by $g(\theta) = \pi + \theta \pmod{2\pi}$ with $\theta \in [0,2\pi)$ parameterizing $\mathbb S^1$. Then $g$ is discontinuous at $\theta = \pi$ and the discontinuity jump is $-2\pi$. Therefore $\var_{\mathbb S^1} g = 2\pi$.
\end{example}

\begin{example}\label{ex:lc}
  Consider any step function $g : \mathbb S^1 \to \mathbb R$. Then the discontinuity jumps must cancel so that $g(0) = \lim_{\theta \to 2\pi^-} g(\theta)$, assuming that $g$ is continuous at $0$. Therefore $\var_{\mathbb S^1} g = 0$.
\end{example}

Note that $X_{I_2} + k X_{I_1}$, $k \in \mathbb Z$, also represents a periodic Hamiltonian vector field associated to the second action coordinate $I_2 + k I_1$. Therefore, a variation of the rotation number by $-2 k \pi$ over $\Gamma$ implies a change of the corresponding action vector field by $k I_1$. Furthermore, since action vector fields generate a basis of the homology group $H_1(\EM^{-1}(v))$ we conclude that, going along $\Gamma$, an initial cycle $\gamma_2$ generated by $X_{I_2}$ is transported to the final cycle $\gamma_2 + k \gamma_J$ and therefore we have a non-trivial monodromy matrix. This comparison shows that the variation 
\begin{equation}\label{varTheta}
  \var_\Gamma \Theta = -2 k \pi
\end{equation}
measures the monodromy number $k$ and hence the non-triviality of the $\mathbb T^2$-bundle over $\Gamma$.

\begin{remark}
  Another way to obtain such integer is to consider the function $\widetilde \Theta|_\Gamma: \Gamma \to \mathbb S^1$ that, being a map from a circle to itself, can possibly have a non-zero degree which is precisely the variation of $\Theta$ along $\Gamma$.
\end{remark}

\subsection{Rotation 1-forms}

In applications, the rotation number $\Theta$ and its variation $\var_\Gamma \Theta$ are typically computed by integrating a closed $1$-form $\vartheta$ along the orbit $\gamma_H$. We formalize here this approach and clarify certain technical aspects of this computation.

\begin{definition}\label{defrotform}
  A \emph{rotation 1-form} is a $1$-form $\vartheta$, defined in an $\mathbb S^1$ invariant subset $\mathcal M'$ of $\mathcal M$, such that $\vartheta$ is closed and $\vartheta(X_J) = 1$. Points in the set $\Pi = \mathcal M \setminus \mathcal M'$ are called \emph{poles} and $\Pi$ is called the \emph{polar set} of $\vartheta$.
\end{definition}

The condition $\vartheta(X_J) = 1$ ensures that the integral of $\vartheta$ measures the natural time along the flow of $X_J$ when integrated along its orbits. Hence, one can use it to define a local angle coordinate along the orbits of $X_J$. Moreover, the conditions in the definition imply that $L_{X_J} \vartheta = d(\vartheta(X_J)) + d\vartheta(X_J,-) = 0$. The latter relation ensures that $\vartheta(X_H)$ is an $X_J$-invariant function, and therefore it descends to a function in the reduced space $\widetilde{\mathcal M} = \mathcal M' / \mathbb S^1$.

The polar set $\Pi$ plays a central role in this work. For this reason we give the geometric intuition for the necessity of introducing $\Pi$ and discuss its role and its properties. We first prove the following result partially characterizing $\mathcal M'$.

\begin{lemma}\label{tpb}
  Let $X_J$ be the generator of an $\mathbb S^1$ action which is free outside fixed points and denote by $\mathcal M_0$ the set of fixed points of the action. Consider the principal circle bundle defined by the flow of $X_J$ on $\mathcal M \setminus \mathcal M_0$. Then $\Pi$ is such that the restriction of the circle bundle to $\mathcal M' = \mathcal M \setminus \Pi$ defines a trivial principal circle bundle. Moreover, $\vartheta$ is a connection $1$-form for the trivial circle bundle defined in $\mathcal M'$.
\end{lemma}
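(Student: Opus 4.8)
The plan is to establish the two assertions separately, dealing first with the (essentially formal) connection-form statement and then with the triviality of the bundle, which carries the real content. Recall that a principal $\setS^1$-bundle is trivial precisely when it admits a global section, equivalently an $\setS^1$-equivariant map from the total space to the fibre $\setS^1 = \setR/2\pi\setZ$. The guiding idea is that the rotation $1$-form $\vartheta$ supplies exactly such data on $\mathcal{M}'$, whereas it fails to extend across $\Pi$; so the polar set is exactly the locus that must be deleted to make room for a trivialisation.

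For the connection-form claim I would simply verify the two defining axioms for an abelian structure group. The reproducing (normalisation) property is immediate: by hypothesis $\vartheta(X_J)=1$, so $\vartheta$ maps the fundamental vector field of the action to the corresponding element $1\in\operatorname{Lie}(\setS^1)\simeq\setR$. Invariance is the identity $L_{X_J}\vartheta=0$ already derived in the text by Cartan's formula from $\dd(\vartheta(X_J))=\dd 1=0$ and $\dd\vartheta=0$. These two facts are precisely the definition of a principal connection $1$-form, and closedness $\dd\vartheta=0$ says its curvature vanishes, i.e.\ the connection is flat.

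The substantive part is the triviality of the bundle over $\mathcal{M}'$, which I would obtain by producing a global section from $\vartheta$. Since $\vartheta$ is flat, its horizontal distribution $\mathcal{H}=\ker\vartheta$ is $\setS^1$-invariant (invariance of $\vartheta$), transverse to the orbits (because $\vartheta(X_J)=1\neq 0$), and integrable (by $\dd\vartheta=0$ and Frobenius); it therefore foliates $\mathcal{M}'$ by codimension-$1$ leaves, each projecting to the base $B=\mathcal{M}'/\setS^1$ as a covering. Dually, integrating the closed form $\vartheta$ from a base point yields a locally defined angle function $\alpha$ with $\dd\alpha=\vartheta$ that advances at unit speed along the $X_J$-flow and accumulates exactly $2\pi$ over one orbit, since $\int_{\setS^1 p}\vartheta=2\pi$ by the normalisation. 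A leaf meeting every orbit exactly once is the graph of a global section, giving $\mathcal{M}'\simeq B\times\setS^1$, under which $\vartheta$ becomes the standard angular form $\dd\theta$.

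The main obstacle is that, a priori, a leaf need only cover $B$, so this construction yields an honest equivariant map to $\setS^1$ only when the holonomy of the flat connection is trivial, equivalently when every period $\int_\gamma\vartheta$ over a loop $\gamma\subset\mathcal{M}'$ lies in $2\pi\setZ$. The normalisation fixes the period over the fibre orbit to be $2\pi$, so the remaining obstruction lives entirely in the base; flatness already forces the real Euler class $\bigl[\tfrac{1}{2\pi}\dd\vartheta\bigr]=0$, so the circle bundle over $\mathcal{M}'$ has torsion first Chern class and is trivial once $H^2(B;\setZ)$ is torsion-free, which holds for the open quotients arising here (homotopy equivalent to $1$- or $2$-complexes). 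I expect that controlling exactly these periods, rather than the formal check that $\vartheta$ is a flat connection, is the delicate point in any fully general version of the statement, and it is precisely the requirement that pins down how much of $\mathcal{M}$ the poles $\Pi$ must carve out.
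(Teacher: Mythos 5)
Your verification that $\vartheta$ is a flat connection $1$-form is exactly the paper's argument: the normalisation $\vartheta(X_J)=1$, the invariance $L_{X_J}\vartheta=0$ (automatic, via Cartan's formula, from closedness), and the vanishing of the curvature because $d\vartheta=0$. Where you genuinely depart from the paper is that you recognise that flatness alone does \emph{not} force triviality of a principal $\mathbb{S}^1$-bundle: the holonomy of a flat connection is a representation $\pi_1(B)\to\mathbb{S}^1$ that need not be trivial, and at the level of characteristic classes flatness only kills the image of the Chern class in real cohomology, i.e.\ it shows the Chern class is torsion. The paper's proof passes over this point in silence (``the curvature $2$-form \dots\ ensures the triviality of the bundle''), so on this step your write-up is more careful than the printed one.

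However, your attempted repair does not close the gap you identified. You claim triviality ``once $H^2(B;\mathbb{Z})$ is torsion-free, which holds for the open quotients arising here (homotopy equivalent to $1$- or $2$-complexes)''. The implication (torsion Chern class) $+$ (torsion-free $H^2$) $\Rightarrow$ (trivial bundle) is fine, but the torsion-freeness assertion is unsupported and is false at this level of generality: a $2$-complex can perfectly well carry torsion in $H^2$, and so can the bases relevant here, which are open orientable $3$-manifolds $B=\mathcal{M}'/\mathbb{S}^1$. For instance, a punctured lens space $L(p,q)\setminus\{\mathrm{pt}\}$ is homotopy equivalent to the Moore complex $\mathbb{S}^1\cup_p e^2$ and has $H^2(B;\mathbb{Z})\cong\mathbb{Z}/p$; over such a base there exist non-trivial principal $\mathbb{S}^1$-bundles admitting flat connections, i.e.\ $1$-forms satisfying exactly the two defining conditions of a rotation $1$-form in Definition~\ref{defrotform}. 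So the homotopy-dimension remark cannot do the work you ask of it; what is actually needed is control of the holonomy --- that every period $\int_\gamma\vartheta$ over a loop $\gamma\subset\mathcal{M}'$ lies in $2\pi\mathbb{Z}$ --- or an additional hypothesis on $\mathcal{M}'$ ensuring torsion-free $H^2$ of the quotient. To be fair, this missing ingredient is equally missing from the paper's own proof, which simply asserts that flatness ensures triviality; but since your argument explicitly rests on the torsion-freeness claim, the gap in your proposal is localised precisely there.
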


\begin{proof}
  Since the rotation $1$-form $\vartheta$ satisfies $\vartheta(X_J)=1$ and $L_{X_J} \vartheta = 0$ it is a connection $1$-form for the principal circle bundle defined by the flow of $X_J$ on $\mathcal M'$. Moreover, the condition $d\vartheta = 0$ implies that the curvature $2$-form for the corresponding circle bundle is trivial and ensures the triviality of the bundle. Therefore, a rotation 1-form $\vartheta$ can be only defined on a set $\mathcal M'$ so that the restriction of the principal circle bundle to $\mathcal M'$ gives a trivial bundle.
\end{proof}

\begin{lemma}\label{poles-variety}
  If $\bar p$ is a fixed point of the $\mathbb S^1$ action induced by $X_J$ then a rotation $1$-form $\vartheta$ defined in a neighbourhood $U$ of $\bar p$ must have a non-empty polar set $\Pi$ with $\bar p \in \Pi$. Moreover, if the $\setS^1$ action is free in $U \setminus \{ \bar p \}$ then $\Pi \cap U$ must contain a two-dimensional manifold.
\end{lemma}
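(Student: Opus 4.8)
The plan is to treat the two assertions separately: the first is immediate from the defining condition $\vartheta(X_J)=1$, while the second rests on a topological obstruction coming from the local Hopf structure of the action together with Lemma~\ref{tpb}. For the first assertion, since $\bar p$ is a fixed point we have $X_J(\bar p)=0$, so for any $1$-form $\eta$ defined at $\bar p$ one has $\eta(X_J)|_{\bar p}=\eta_{\bar p}(0)=0\neq 1$. Hence the condition $\vartheta(X_J)=1$ cannot hold at $\bar p$, so $\bar p$ cannot lie in the domain $\mathcal M'=\mathcal M\setminus\Pi$ of $\vartheta$; that is, $\bar p\in\Pi$ and in particular $\Pi\cap U\neq\emptyset$.

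For the second assertion I would first linearize. By the slice theorem (equivalently equivariant Darboux) the Hamiltonian $\mathbb S^1$ action is linearizable near the isolated fixed point $\bar p$: in suitable coordinates $\mathbb C^2\cong\mathbb R^4$ it acts as $t\cdot(z_1,z_2)=(e^{\mathrm{i}m_1 t}z_1,e^{\mathrm{i}m_2 t}z_2)$ with integer weights, and the round spheres $S^3_\epsilon=\{|z_1|^2+|z_2|^2=\epsilon^2\}$ are invariant. Freeness of the action on $U\setminus\{\bar p\}$ forces $|m_1|=|m_2|=1$, since otherwise a coordinate axis $\{z_2=0\}$ or $\{z_1=0\}$ would carry a nontrivial cyclic stabilizer. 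Thus the restriction of the action to each $S^3_\epsilon$ is the Hopf fibration and every orbit is a circle of period $2\pi$. I also record that $\Pi$ is $\mathbb S^1$-invariant, because by Definition~\ref{defrotform} its complement $\mathcal M'$ is invariant; consequently $\Pi\cap S^3_\epsilon$ is a union of whole orbits.

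The core step is to show that $\Pi$ meets every sphere $S^3_\epsilon$. Suppose not, so that $\vartheta$ restricts to a globally defined closed $1$-form on some $S^3_\epsilon$. Take any orbit $\mathbb S^1 q\subset S^3_\epsilon$: on the one hand $\int_{\mathbb S^1 q}\vartheta=\int_0^{2\pi}\vartheta(X_J)\,\mathrm{d}t=2\pi$, while on the other hand $\mathbb S^1 q$ is null-homologous in $S^3_\epsilon$ since $H_1(S^3_\epsilon)=0$, so the integral of the closed form $\vartheta$ over it must vanish. This contradiction (equivalently, the nonvanishing of the Euler number of the Hopf bundle against $d\vartheta=0$, cf.\ Lemma~\ref{tpb}) shows that $\Pi\cap S^3_\epsilon\neq\emptyset$, and by invariance it contains at least one full orbit, for every $\epsilon\in(0,\epsilon_0)$.

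Finally I would assemble these orbits into a surface by a dimension count. Using the radius function $r$ on $U\setminus\{\bar p\}$, the set $\Pi\setminus\{\bar p\}$ is a closed $\mathbb S^1$-invariant set whose image under $r$ is all of $(0,\epsilon_0)$ and whose slice over each $\epsilon$ contains a $1$-dimensional orbit; therefore $\dim(\Pi\cap U)\geq 2$. Selecting one orbit per radius so as to trace a curve transverse to $r$ in the $3$-dimensional quotient $(U\setminus\{\bar p\})/\mathbb S^1$ and taking its $\mathbb S^1$-saturation then produces an embedded $2$-dimensional submanifold inside $\Pi\cap U$. The delicate point, and the main obstacle, is precisely this last upgrade from \emph{topological dimension at least two} to \emph{contains an embedded $2$-manifold}: it requires the removed orbits to vary regularly with $\epsilon$, which is where one must use that $\Pi$ is the genuine (closed, locally tame) singular locus of $\vartheta$ rather than an arbitrary subset; the preceding steps are robust, but this regularity is what makes the statement precise.
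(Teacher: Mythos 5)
Your first assertion and its proof are exactly the paper's: $X_J(\bar p)=0$ is incompatible with $\vartheta(X_J)=1$, so $\bar p\in\Pi$. Your sphere argument is also sound as far as it goes: after linearizing, freeness forces Hopf weights, and if $\vartheta$ were defined on all of an invariant sphere $S^3_\epsilon$, the period $\int_{\mathbb S^1 q}\vartheta=2\pi$ over an orbit that is null-homologous in $S^3_\epsilon$ would contradict closedness of $\vartheta$; hence $\Pi$ meets every $S^3_\epsilon$ in at least one full orbit. This is the same basic obstruction (nonzero period of a closed form over a cycle that bounds) that drives the paper's proof, implemented sphere-by-sphere instead of in the ball.

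The genuine gap is the step ``therefore $\dim(\Pi\cap U)\ge 2$''. Meeting every sphere in a full orbit does \emph{not} force dimension at least two, because closed maps onto an interval can raise dimension. Concretely, let $C$ be a Cantor set in the orbit space $S^3/\mathbb S^1\simeq S^2$, let $c:C\to[0,\epsilon_0]$ be a surjective devil's-staircase function, and let $Z$ be the union, over $x\in C$ with $c(x)\in(0,\epsilon_0)$, of the Hopf orbit lying over $x$ on the sphere of radius $c(x)$. Then $Z\cup\{\bar p\}$ is closed in $U$, $\mathbb S^1$-invariant, and meets every sphere $S^3_\epsilon$ in at least one orbit, yet it is $1$-dimensional (it is homeomorphic to a subset of $C\times\mathbb S^1$). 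So a set satisfying all the properties you established can still have dimension $1$, and your conclusion does not follow from your premises. What excludes such a $\Pi$ is an obstruction living in the four-dimensional ball rather than on the individual spheres, and this is precisely the paper's argument: if $\Pi$ were contained in a set of dimension at most $1$, i.e.\ of codimension at least $3$, then $B\setminus\Pi$ would still be simply connected, so an orbit $\mathbb S^1 p$ with $\int_{\mathbb S^1 p}\vartheta=2\pi$ would bound a disk $\Delta\subset B\setminus\Pi$, and Stokes' theorem would give $2\pi=\int_{\Delta}d\vartheta=0$. Your sphere-by-sphere argument cannot detect this because the slices never interact. (The further upgrade from ``dimension at least $2$'' to ``contains a two-dimensional manifold'', which you honestly flag as delicate, is treated just as informally in the paper itself; the decisive ingredient missing from your write-up is the ball-level $\pi_1$/Stokes argument, not that last regularity point.)
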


\begin{proof}
  The rotation 1-form $\vartheta$ cannot be defined at $\bar p$ since $X_J(\bar p) = 0$ but $\vartheta(X_J)(p) = 1$ whenever $\vartheta$ is defined. Therefore $\bar p \in \Pi$. Let now $p$ be a point in an $\mathbb S^1$-invariant open ball $B \ni \bar p$ at which $\vartheta$ is defined. By invariance under the flow of $X_J$, the form $\vartheta$ is defined in all points of the orbit $\mathbb S^1\,p$ through $p$. Since $\vartheta(X_J) = 1$ we have that $\int_{\mathbb S^1\,p} \vartheta = 2\pi$. If $\pi_1(B \setminus \Pi)$ were trivial, then there would exist a disk $\Delta$ in $B \setminus \Pi$, bounded by the orbit $\mathbb S^1\, p$, and then we would get the contradiction $2\pi = \int_{\mathbb S^1\,p} \vartheta = \int_{\Delta} d \vartheta = 0$. Therefore, $\pi_1(B \setminus \Pi)$ must be non-trivial and hence $\Pi$ must contain a nonempty manifold passing through $\bar p$. The non-triviality of $\pi_1(B \setminus \Pi)$ excludes simple possibilities of $\Pi$ being contained in a manifold of dimension $0$ and $1$.
\end{proof}

Note that Lemma~\ref{poles-variety} does not exclude the possibility that $\Pi$ contains a manifold of dimension $3$.

We now consider under what conditions a rotation $1$-form can be defined and how it can be constructed. We start with the following result.

  \begin{lemma}\label{lem-rot-0}
    Suppose that the flow of $X_J$ defines a \emph{trivial} principal circle bundle in $\mathcal M'$. Then there exists a rotation $1$-form $\vartheta$ without poles in $\mathcal M'$.
\end{lemma}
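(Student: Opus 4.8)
The plan is to construct the rotation $1$-form explicitly from a trivialization of the bundle, then verify the two defining conditions $\vartheta(X_J)=1$ and $d\vartheta=0$ directly. Since the bundle is trivial, I would fix a global section $\sigma : \widetilde{\mathcal M} = \mathcal M'/\mathbb S^1 \to \mathcal M'$ of the projection $\pi : \mathcal M' \to \widetilde{\mathcal M}$. A trivialization $\Phi : \mathcal M' \to \widetilde{\mathcal M} \times \mathbb S^1$ compatible with $\sigma$ then provides a globally defined fiber coordinate: writing $\Phi(q) = (\pi(q), \alpha(q))$ with $\alpha$ taking values in $\mathbb S^1 = \mathbb R/2\pi\mathbb Z$, the $\mathbb S^1$-action is $s_t \cdot q \mapsto (\pi(q), \alpha(q) + t)$, so that along each orbit $\alpha$ increases at unit rate under the flow of $X_J$, i.e.\ $X_J(\alpha) = 1$. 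Although $\alpha$ itself is only $\mathbb S^1$-valued (not a global smooth $\mathbb R$-valued function), its differential $d\alpha$ is a genuine globally defined closed $1$-form on $\mathcal M'$.

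I would then \emph{set} $\vartheta = d\alpha$ and check the two conditions. First, $\vartheta(X_J) = d\alpha(X_J) = X_J(\alpha) = 1$, since the flow of $X_J$ translates the fiber coordinate at unit speed; this is exactly where the normalization of the circle action (period $2\pi$, with $t \in [0,2\pi]$ as fixed in the paper) enters. Second, $d\vartheta = d(d\alpha) = 0$, so $\vartheta$ is closed. Because $\alpha$ is defined on all of $\mathcal M'$, the form $d\alpha$ has no poles in $\mathcal M'$, which is precisely the assertion. By Lemma~\ref{tpb} this $\vartheta$ is automatically a connection $1$-form for the trivial bundle, consistent with the whole framework.

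The one subtlety I would be careful about is the global existence of the angular coordinate. The definition of triviality of a principal $\mathbb S^1$-bundle is exactly the existence of a bundle isomorphism $\Phi : \mathcal M' \to \widetilde{\mathcal M} \times \mathbb S^1$ commuting with the $\mathbb S^1$-action; equivalently, the existence of a global section. This is the content I am allowed to assume from the hypothesis, so the coordinate $\alpha$ and hence the form $d\alpha$ exist globally on $\mathcal M'$ by construction. The point worth stating explicitly is that $\alpha$ is circle-valued and thus not itself a potential for $\vartheta$ in the naive sense, but $d\alpha$ is nonetheless a well-defined smooth closed $1$-form (the pullback under $\Phi$ of the standard angle form $d\theta$ on the $\mathbb S^1$ factor), and it is this object that serves as the rotation $1$-form.

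\textbf{The main obstacle} is essentially only this last bookkeeping concerning whether one even needs more than the bare trivialization: one might worry that an arbitrary global section gives a closed normalized form. Here the normalization $\vartheta(X_J)=1$ is guaranteed by choosing the trivialization to intertwine the $\mathbb S^1$-actions (the standard one on the $\mathbb S^1$ factor), and closedness is free because $d\theta$ is closed on $\mathbb S^1$ and pullback commutes with $d$. So once the trivialization is fixed, both conditions are immediate and there is no genuine analytic difficulty; the entire weight of the lemma rests on the hypothesis of triviality, which converts the local angle coordinate into a global one.
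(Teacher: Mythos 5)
Your proposal is correct and takes essentially the same approach as the paper: the paper's proof likewise fixes a smooth section $s : \mathcal M'/\mathbb S^1 \to \mathcal M'$ and sets $\vartheta = du$, where $u$ is the flow-time of $X_J$ from the section to the point --- exactly your fiber coordinate $\alpha$ from a trivialization intertwining the circle actions. Your explicit remark that $\alpha$ is only circle-valued while $d\alpha$ is nonetheless a globally defined closed $1$-form is a point the paper leaves implicit, but it is the same argument.
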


\begin{proof}
  Let $s : \mathcal M' / \mathbb S^1 \to \mathcal M'$ be a smooth section for the bundle. Define an angle $u$ in $\mathcal M'$ as the time it takes for the flow of $X_J$ to move from the image of the section $s$ to a point $p$. Then the $1$-form $\vartheta = du$ can be shown to satisfy the requirements of Definition~\ref{defrotform}. The triviality of the principal bundle ensures that this $1$-form is well-defined and has no poles.
\end{proof}

Then the idea for constructing a rotation $1$-form is that given an $\mathbb S^1$ action in $\mathcal M$ we can obtain a trivial principal circle bundle by taking out a large enough set (which includes the points with non-trivial isotropy) so that in the remaining part we have a trivial principal circle bundle. Moreover, we have the following result.

\begin{lemma}\label{lem-rot}
  Given $\mathcal M$ as above, there always exists a set $\Pi$, finite union of submanifolds of $M$ of codimension at least $1$, outside of which a rotation $1$-form exists.
\end{lemma}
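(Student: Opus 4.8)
The plan is to construct the polar set $\Pi$ in two stages: first I would delete the points where the $\mathbb{S}^1$-action generated by $X_J$ fails to be free, so that $X_J$ becomes the infinitesimal generator of a genuine principal circle bundle, and then delete one further codimension-two submanifold which kills the Euler class of that bundle, after which Lemma~\ref{lem-rot-0} applies verbatim.

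First I would let $\mathcal{M}_{\mathrm{sing}}$ be the union of the fixed-point set $\mathcal{M}_0$ together with the exceptional orbits carrying finite non-trivial isotropy. For a Hamiltonian circle action the fixed-point set is a finite union of symplectic submanifolds of even codimension, and the points of finite isotropy sweep out finitely many submanifolds of codimension at least one, so $\mathcal{M}_{\mathrm{sing}}$ is already of the type required for $\Pi$. On the complement $P := \mathcal{M}\setminus\mathcal{M}_{\mathrm{sing}}$ the action is free, giving a principal circle bundle $\pi : P \to B := P/\mathbb{S}^1$.

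Next I would use that a principal circle bundle is trivial exactly when its Euler class $e \in H^2(B;\mathbb{Z})$ vanishes, together with the classical fact that every degree-two integral class is Poincar\'e dual to a smooth codimension-two submanifold $S \subset B$ (realised, for instance, as the transverse zero locus of a section of the associated complex line bundle). Removing $S$ kills the obstruction: in the long exact sequence of the pair $(B, B\setminus S)$ the class $e$ comes from the Thom class supported near $S$, so its restriction $e|_{B\setminus S}$ vanishes and the bundle $\pi^{-1}(B\setminus S)\to B\setminus S$ is trivial. I would then set $\Pi = \mathcal{M}_{\mathrm{sing}} \cup \pi^{-1}(S)$; the restricted circle bundle over $\mathcal{M}' = \mathcal{M}\setminus\Pi$ is trivial, so Lemma~\ref{lem-rot-0} yields a rotation $1$-form on $\mathcal{M}'$ with no poles. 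As $\pi^{-1}(S)$ is a circle bundle over the codimension-two submanifold $S$, it has codimension two in $\mathcal{M}$, and hence $\Pi$ is a finite union of submanifolds of codimension at least one, as required.

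The step I expect to be the main obstacle is verifying that the two deleted pieces are \emph{genuine} submanifolds of the stated codimension rather than merely closed sets: realising $e$ by a smooth $S$ rests on a Thom-representability/transversality argument, and the description of $\mathcal{M}_{\mathrm{sing}}$ as a finite union of submanifolds relies on the equivariant normal form for Hamiltonian circle actions near points of non-trivial isotropy. A cleaner alternative, closer to the spirit of the present paper, bypasses Lemma~\ref{lem-rot-0} altogether: choose any invariant connection $1$-form $\alpha$ on $P$ with $\alpha(X_J)=1$ and $L_{X_J}\alpha=0$, write its curvature as $d\alpha = \pi^{*}\Omega$, and observe that on $B\setminus S$ the class $[\Omega]$ vanishes, so $\Omega = d\beta$; then $\vartheta := \alpha - \pi^{*}\beta$ is closed and still satisfies $\vartheta(X_J)=1$, producing the desired rotation $1$-form directly on $\mathcal{M}'$.
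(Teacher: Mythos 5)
Your argument is correct in substance, but it takes a genuinely different route from the paper's. The paper's own proof is a two-line appeal to stratification theory: one equips $\mathcal M$ with a stratification having a unique open, dense, \emph{contractible} cell of maximal dimension, notes that over a contractible set the principal circle bundle is trivial, and invokes Lemma~\ref{lem-rot-0}; the polar set $\Pi$ is then the whole codimension-$\geq 1$ skeleton. You instead isolate the exact obstruction to triviality: after deleting the non-free locus you pass to the quotient $B=P/\mathbb{S}^1$, represent the Euler class $e\in H^2(B;\mathbb{Z})$ by the transverse zero locus $S$ of a section of the associated line bundle, and delete $\pi^{-1}(S)$; triviality over the complement follows from the classification of principal $\mathbb{S}^1$-bundles, or, in your cleaner variant, one bypasses Lemma~\ref{lem-rot-0} entirely by taking $\vartheta=\alpha-\pi^{*}\beta$ with $\alpha$ a connection form and $\beta$ a primitive of the reduced curvature on $B\setminus S$. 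Your route uses more machinery (bundle classification, representability of degree-two classes by submanifolds, the orbit-type decomposition of the non-free locus), but it buys two things the paper's argument does not. First, your $\mathcal M'$ is automatically $\mathbb{S}^1$-invariant, as Definition~\ref{defrotform} requires, because $S$ is removed downstairs in the quotient; an arbitrary contractible top cell of a CW structure on $\mathcal M$ need not be invariant, so the paper's stratification must implicitly be chosen compatible with the action. Second, away from the singular orbits your polar set has codimension two rather than merely codimension one, which matches the standing assumption in the rest of the paper (and all the examples of Section~\ref{sec/examples}) that $\Pi$ is a two-dimensional manifold in a four-dimensional phase space. The technical caveats you flag yourself are real but standard: finiteness (or at least local finiteness) of isotropy types holds for proper circle actions by the slice theorem and is automatic in the setting of Theorem~\ref{thm/main}, where everything happens in a neighborhood of the compact set $F^{-1}(\Gamma)$, and the smooth representability of $e$ by a transverse zero locus is classical. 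Neither of these invalidates the argument.
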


\begin{proof}
  From the theory of CW complexes, one can always assume that $\mathcal M$ admits a stratification of submanifolds of different codimension and a unique (contractible, open, and dense) cell of maximal dimension. The rotation $1$-form can always be defined on the cell of maximal dimension since the corresponding principal circle bundle is trivial.
\end{proof}

\begin{example}
Let $\bar p$ be a fixed point of the circle action possibly defined in a neighborhood of $\bar p$ and free except at $\bar p$. Then the circle action can be locally linearized as $(z,w) \mapsto (e^{it}z,e^{\pm it}w)$, where $(z,w) \in \mathbb C^2$. In particular, the resulting principal circle bundle is isomorphic (up to orientation) to the Hopf fibration and is therefore non-trivial. This means that a rotation $1$-form defined in a punctured neighborhood of a fixed point of the circle action must necessarily have a non-empty set of poles $\Pi$ and the latter should have dimension at least $2$. Assume that we take away the plane $\Pi = \{ z=0 \}$. A bundle section is given by
\begin{align*}
  (\rho_1:=|z|^2,\rho_2:=|w|^2,\chi+i\psi:=\bar{z}w) \mapsto (z,w) = (\sqrt{\rho_1},(\chi+i\psi)/\sqrt{\rho_1}).
\end{align*}
Then $u = \Arg(z)$ and $\vartheta = du = \Im(\bar{z}\,dz/|z|^2)$, or
\begin{align*}
  \vartheta = \frac{x \, dy - y \, dx}{x^2 + y^2},
\end{align*}
where $z = x + i y$.
\end{example}

In what follows we assume that $\Pi$ is a two-dimensional manifold, which is smooth outside fixed points of the $\mathbb S^1$ action. This is a reasonable assumption given that the polar set $\Pi$ of the rotation 1-form $\vartheta$ is a two-dimensional smooth manifold in all examples known to the authors, cf.~Section~\ref{sec/examples}. 

We make use of the following transversality notion.

\begin{definition}\label{deftrans}
  A rotation $1$-form $\vartheta$ with a two-dimensional manifold of poles $\Pi$ is \emph{transversal to $\EM$} if $\EM|_\Pi$ has rank $1$ outside fixed points of the $\mathbb S^1$ action.
\end{definition}

Note that the rank of $\EM|_\Pi$ cannot equal $2$ since both $\EM$ and $\Pi$ are $X_J$ invariant.

\begin{lemma}\label{lemma/transversality}
Consider a rotation 1-form $\vartheta$ transversal to $\EM$, let $\Pi_r = \Pi \setminus \{ \text{fixed points of the $\mathbb S^1$ action} \}$, and assume that the circle action is free outside fixed points. Then $\EM(\Pi_r)$ is a smooth open one-dimensional manifold and for each $v \in \EM(\Pi_r)$ the intersection $\EM^{-1}(v) \cap \Pi_r$ consists of a finite number of $\mathbb S^1$ orbits.
\end{lemma}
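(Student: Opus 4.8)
The plan is to pass to the quotient by the $\mathbb{S}^1$ action and reduce both assertions to elementary facts about immersions of $1$-manifolds. First I would record the structure on $\Pi_r$. Since $\Pi$ is two-dimensional and smooth away from fixed points, $\Pi_r$ is a smooth two-dimensional manifold; since the action is free outside fixed points and $\Pi$ is $X_J$-invariant, the action restricts to a free action on $\Pi_r$. Because $\mathbb{S}^1$ is compact the action is proper, so the quotient $\Sigma := \Pi_r / \mathbb{S}^1$ is a smooth one-dimensional manifold and the projection $\pi : \Pi_r \to \Sigma$ is a principal circle bundle.

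Next I would descend $\EM$ to the quotient. As $H$ and $J$ Poisson commute and $J$ generates the action, $\EM$ is $X_J$-invariant, so $\EM|_{\Pi_r}$ factors as $\EM|_{\Pi_r} = \bar\EM \circ \pi$ for a smooth map $\bar\EM : \Sigma \to \setR^2$. The key point is that $\bar\EM$ is an immersion. Indeed, $d\EM(X_J) = 0$ everywhere, so $X_J$ lies in $\ker d(\EM|_{\Pi_r})$; on $\Pi_r$ one has $X_J \neq 0$, and transversality of $\vartheta$ to $\EM$ forces $\EM|_{\Pi_r}$ to have rank exactly $1$ (recall that its rank can never be $2$, again by $X_J$-invariance of $\EM$ and $\Pi$). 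Hence $\ker d(\EM|_{\Pi_r})$ is one-dimensional and equals $\setR X_J$, the tangent line to the orbits, so the induced map $\bar\EM$ has everywhere injective differential on the one-dimensional $\Sigma$. An immersion of a $1$-manifold is a local embedding, so its image is locally a smoothly embedded curve; this gives the first claim, that $\EM(\Pi_r) = \bar\EM(\Sigma)$ is a smooth one-dimensional (immersed) submanifold, with no endpoints precisely because the fixed points, which would close it off, have been removed (hence \emph{open}).

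For the second claim I would use that an immersion is locally injective, so $\bar\EM^{-1}(v)$ is a \emph{discrete} subset of $\Sigma$, and $\EM^{-1}(v) \cap \Pi_r = \pi^{-1}(\bar\EM^{-1}(v))$ is exactly the union of the circle orbits lying over these points. It remains to upgrade ``discrete'' to ``finite,'' and this is where the only real obstacle lies. I would close the gap with compactness: the fibers of $\EM$ are compact and $\Pi$ is closed (being the complement of the open set $\mathcal M'$), so $\EM^{-1}(v) \cap \Pi$ is compact. The essential observation is that every fixed point is a critical point of $\EM$: there $dJ = 0$, so $d\EM$ has rank at most $1$. Thus for a regular value $v$ the fiber $\EM^{-1}(v)$ contains no fixed points, whence $\EM^{-1}(v) \cap \Pi = \EM^{-1}(v) \cap \Pi_r$ is compact; its quotient by the free circle action is $\bar\EM^{-1}(v)$, which is then simultaneously compact and discrete, hence finite, yielding finitely many polar orbits.

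The difficulty I anticipate is precisely this finiteness when $v$ is not assumed regular: then $\EM^{-1}(v) \cap \Pi$ may contain fixed points, and a priori infinitely many orbits of $\Pi_r$ could accumulate onto such a point, something compactness alone does not forbid. In the analytic setting of the examples this is excluded because near a fixed point the action linearizes to a Hopf-type model with $\Pi$ the coordinate plane $\{z=0\}$, and $\bar\EM$ restricts to an analytic immersed curve parametrized by the orbit radius, so $\bar\EM = v$ has isolated solutions unless $\bar\EM$ is constant, which the rank-$1$ condition rules out. Since the monodromy application only ever evaluates at values $v$ lying on a path $\Gamma$ of regular values, the clean regular-value argument above is exactly what is needed downstream.
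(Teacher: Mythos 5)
Your proof of the first assertion coincides with the paper's own: pass to the quotient $\Pi_r/\mathbb S^1$, use transversality to conclude that $\EM|_{\Pi_r}$ has rank $1$ and hence descends to an immersion of the one-dimensional quotient, whose image is then a smooth one-dimensional (immersed) manifold. Where you genuinely depart from the paper is the finiteness claim. The paper disposes of it in one sentence --- ``there is a finite number of them by transversality of $\vartheta$'' --- which is an assertion rather than an argument: local injectivity of an immersion gives only \emph{discreteness} of the set of preimages in $\Pi_r/\mathbb S^1$, and discreteness alone does not give finiteness. Your compactness argument (the fiber is compact, $\Pi$ is closed, a regular fiber contains no fixed points because $dJ=0$ at a fixed point forces the value to be critical, hence $\EM^{-1}(v)\cap\Pi_r$ is compact and its quotient is compact and discrete, hence finite) supplies exactly the missing step, so on this point your proposal is more complete than the paper. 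Your caveat about critical values is also well taken: values $v\in\EM(\Pi_r)$ need not be regular (for the spherical pendulum, the focus-focus value $(1,0)$ lies in $\EM(\Pi_-)$, attained on the orbit $p_1^2+p_2^2=4$ of $\Pi_{-}$), and there the smooth category alone does not exclude infinitely many polar orbits accumulating on a fixed point in the fiber; the paper's proof does not address this case either. Since the lemma is only ever invoked along a path $\Gamma$ lying in the set of regular values (Theorem~\ref{thm/main}), your regular-value argument covers every downstream use, so the restriction you flag is a gap in the statement's full generality shared by the paper, not a defect introduced by your proof.
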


\begin{proof}
  By our assumptions on $\Pi$, $\Pi_r$ is a smooth two-dimensional manifold. Since the $\mathbb S^1$ action is free on $\Pi_r$, the reduced $\Pi_r / \mathbb S^1$ is a one-dimensional manifold. By transversality to $\EM$, $\EM|_{\Pi_r}$ is of rank $1$, which implies that it reduces to a map $f$ of rank $1$ on $\Pi_r / \mathbb S^1$. Therefore, the map $f$ is an immersion and it follows that its image $f(\Pi_r / \mathbb S^1) = \EM(\Pi_r)$ is smooth one-dimensional. 

  For each $v \in F(\Pi_r)$, by transversality, $\EM^{-1}(v) \cap \Pi_r$ is one-dimensional. From $\mathbb S^1$-invariance of $F$ and $\Pi_r$, $F^{-1}(v) \cap \Pi_r$ is a union of $\mathbb S^1$ orbits. There is a finite number of them by transversality of $\vartheta$.
\end{proof}

Note that the manifold of poles $\Pi$ is well-defined within the disk $D$ above which a circle action is well-defined. Of course, if as in many examples the action is global, then the manifold of poles $\Pi$ is defined globally.

\subsection{Variation of the Rotation Number and Rotation 1-forms}

\begin{figure}
  \includegraphics[width=0.5\columnwidth]{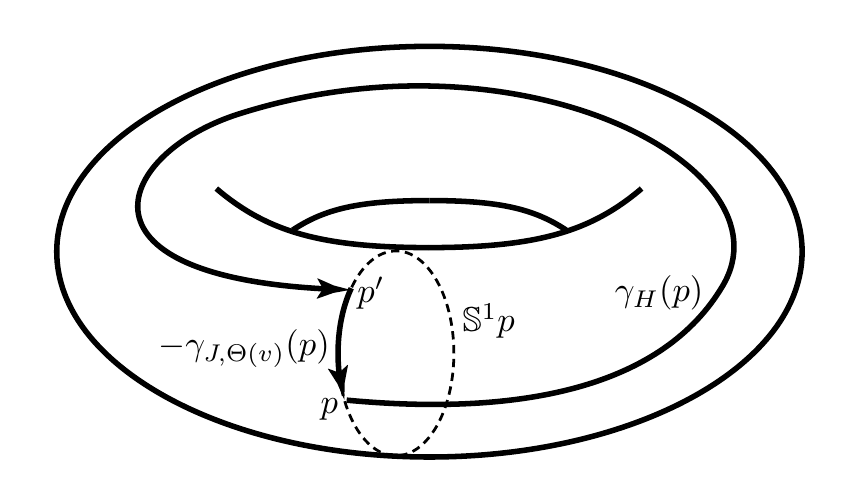}
  \caption{The orbit segment $\gamma_H(p)$ on a torus $\EM^{-1}(v)$. The cycle $\delta_H(p)$ is defined by adding to $\gamma_H(p)$ the curve $-\gamma_{J,\Theta(v)}(p)$ from $p'$ to $p$ along $\mathbb S^1 p$.}
  \label{torus}
\end{figure}

Let $v \in \mathcal R$, and let $p \in \EM^{-1}(v) \subset \mathcal M$. As in the definition of the rotation number let $\mathbb S^1 p$ be the $X_J$ orbit through $p$ and $\gamma_H(p)$ the segment of the orbit of $X_H$ on $\EM^{-1}(v)$ starting from $p$ and flowing until it meets $\mathbb S^1 p$ at a point $p'$. We call $\delta_H(p)$ the closed curve which is the result of joining $\gamma_H(p)$ with the curve $-\gamma_{J,\Theta(v)}(p)$. The latter is obtained by flowing along $X_J$ from $p'$ for time $-\Theta(v)$, that is, until closing at $p$.

The integration of $\vartheta$ along the paths $\gamma_H(p)$ and $\delta_H(p)$ does not depend on the choice of the point $p$ in the fiber $\EM^{-1}(v)$. Therefore, for $v \in \mathcal R$ we define
\begin{align}\label{def-Phi}
  \Phi(v) = \int_{\gamma_H(p)} \vartheta,
\end{align}
where $p$ is any point in $\EM^{-1}(v)$. Note that $\Phi$ is not defined and may not be extended by continuity whenever $\EM^{-1}(v)$ intersects $\Pi$.
 
\begin{lemma}\label{rotation angle and rotation 1-form integral} The following facts hold:
\begin{enumerate}[label={(\alph*)}]
\item Let $\Gamma$ be a closed path in the set of regular values $\mathcal R$ which transversally intersects $\EM(\Pi)$. Then
  \begin{align*}
    \var_\Gamma \Phi = \var_\Gamma \Theta,
  \end{align*}
  where $\Theta$ is the rotation number and $\Phi$ is given by Eq.~\eqref{def-Phi}.
\item Let $v \in \mathcal R$. Then $\Phi(v) = \Theta(v) \pmod{2\pi}$ if and only if there
  exists a cycle $\delta \in H_1(\EM^{-1}(v))$, independent of the cycle defined by the
  $X_J$-orbit, such that $\int_\delta \vartheta = 0$.
\end{enumerate}
\end{lemma}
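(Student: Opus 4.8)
The plan is to derive both parts from a single identity. First I would observe that for every regular value $v$ with $\EM^{-1}(v)\cap\Pi=\emptyset$ one has
\[
  \int_{\delta_H(p)}\vartheta=\Phi(v)-\Theta(v),
\]
because $\delta_H(p)$ concatenates $\gamma_H(p)$ with the reversed $X_J$-segment $-\gamma_{J,\Theta(v)}(p)$ and, since $\vartheta(X_J)=1$, the latter contributes $\int_0^{\Theta(v)}\vartheta(X_J)\,dt=\Theta(v)$ while the former contributes $\Phi(v)$ by definition. I would also record two facts used throughout: the class $[\delta_H(p)]$ completes the $X_J$-orbit class $\gamma_J$ to a basis of $H_1(\EM^{-1}(v))\cong\mathbb Z^2$ (the $X_H$-segment closes after exactly one turn around the complementary cycle), and $\int_{\gamma_J}\vartheta=2\pi$.

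For part (a) I would note first that $\Phi$ is smooth on $\mathcal R\setminus\EM(\Pi)$, since it depends on $v$ only through the segment $\gamma_H(p)$, which avoids $\Pi$ exactly when $\EM^{-1}(v)\cap\Pi=\emptyset$; whereas $\Theta$ is smooth off its zero set $\EM(\mathcal Z)$, where it jumps by $\pm2\pi$. The transversality of $\Gamma$ to $\EM(\Pi)$ (and genericity at $\EM(\mathcal Z)$) guarantees finitely many crossings, so all the relevant jumps are well defined. The key step is that $\Phi-\Theta=\int_{\delta_H}\vartheta$ is \emph{locally constant} on $\mathcal R\setminus(\EM(\Pi)\cup\EM(\mathcal Z))$: as $v$ varies there, consecutive cycles $\delta_H$ cobound a cylinder inside $\EM^{-1}(\Gamma)\setminus\Pi$, so $d\vartheta=0$ and Stokes' theorem force equal periods. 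Hence, although $\Phi$ and $\Theta$ each vary smoothly, their smooth variations cancel because $\delta_H$ is a closed loop. Restricted to $\Gamma$, the function $\Phi-\Theta$ is therefore piecewise constant, with jumps only where $\Gamma$ meets $\EM(\mathcal Z)$ — there $\Theta$ jumps by $\pm2\pi$ while $\Phi$ stays continuous, so $\Phi-\Theta$ jumps by $\mp2\pi$, consistently with $\delta_H$ gaining or losing one $X_J$-loop $\gamma_J$ as $\Theta$ wraps through $0$ — and where $\Gamma$ meets $\EM(\Pi)$, across which $\delta_H$ cannot be carried. Since $\Gamma$ is a loop, a piecewise-constant function returns to its starting value, so the signed sum of all jumps of $\Phi-\Theta$ vanishes; by linearity of the variation this is precisely $\var_\Gamma\Phi=\var_\Gamma\Theta$.

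For part (b) I would fix a $v$ for which $\Phi(v)$ is defined, i.e.\ $\EM^{-1}(v)\cap\Pi=\emptyset$. From the identity and $\int_{\gamma_J}\vartheta=2\pi$ it follows that $\Phi(v)\equiv\Theta(v)\pmod{2\pi}$ if and only if $\int_{\delta_H}\vartheta\in2\pi\mathbb Z$, say $\int_{\delta_H}\vartheta=2\pi n$. In that case $\delta:=\delta_H-n\,\gamma_J$ satisfies $\int_\delta\vartheta=0$ and, differing from $\delta_H$ only by a multiple of $\gamma_J$, still completes $\gamma_J$ to a basis of $H_1(\EM^{-1}(v))$, hence is independent of the $X_J$-orbit; this gives the forward direction. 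Conversely, if some $\delta$ completing $\gamma_J$ to a basis has $\int_\delta\vartheta=0$, then writing $\delta_H=\alpha\gamma_J+\beta\delta$ (with $\beta=\pm1$, as both $\{\gamma_J,\delta\}$ and $\{\gamma_J,\delta_H\}$ are bases) yields $\int_{\delta_H}\vartheta=2\pi\alpha\in2\pi\mathbb Z$, whence $\Phi(v)\equiv\Theta(v)\pmod{2\pi}$.

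The main obstacle I anticipate is the bookkeeping in (a): making rigorous the local-constancy (Stokes) claim, including the verification that the interpolating cylinders stay off $\Pi$, and checking that the discrete change of $\delta_H$ by $\mp\gamma_J$ as $\Theta$ passes through $0$ accounts \emph{exactly} for the $2\pi$-jumps of $\Theta$, so that no spurious contribution remains and only the $\EM(\Pi)$-crossings and $\EM(\mathcal Z)$-crossings balance. A minor subtlety is the reading of ``independent'' in (b): the equivalence genuinely needs $\delta$ to complete $\gamma_J$ to a basis — a primitive complementary cycle, which is exactly the class carried by $\delta_H$ — rather than a merely linearly independent one.
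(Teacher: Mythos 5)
Your proof follows essentially the same route as the paper's: the key identity $\int_{\delta_H}\vartheta = \Phi - \Theta$, the observation that this period is a piecewise-constant (step) function along $\Gamma$ whose jumps must cancel around a closed loop (part (a)), and the basis manipulation $\delta = \delta_H - n\,\gamma_J$ with the converse decomposition of $\delta_H$ (part (b)). Your two refinements --- explicitly tracking the $\mp 2\pi$ jumps of $\int_{\delta_H}\vartheta$ where $\Gamma$ crosses $\EM(\mathcal Z)$ (the paper attributes discontinuities only to poles), and reading ``independent'' in (b) as ``completes $\gamma_J$ to a basis'' (without which the paper's step $\delta_H = k_1\delta + k_2\gamma_J$, $k_1,k_2\in\mathbb Z$, can fail) --- are points the paper's own proof glosses over, and you handle them correctly.
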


\begin{proof}
  (a) Consider the representative of $\delta_H$ that goes from $p$ to $p'$ along $\gamma_H$ and then from $p'$ to $p$ along the flow of $X_J$ for time $-\Theta$. By construction, such a path is $\gamma_H - \gamma_{J,\Theta}$ where $\gamma_{J,\Theta}$ is the time-$\Theta$ orbit of $X_J$ from $p$ to $p'$. Then
\begin{align*}
  \int_{\delta_H} \vartheta
  = \int_{\gamma_H} \vartheta -  \int_{\gamma_J,\Theta} \vartheta 
  = \Phi -  \Theta ,
\end{align*}
where we used that $\Theta = \int_{\gamma_{J,\Theta}} \vartheta$, since $\vartheta(X_J)=1$.
Therefore,
\begin{equation}\label{thph}
  \Theta = \Phi - \int_{\delta_H} \vartheta.
\end{equation}

Parameterize $\Gamma$ by $\Gamma : [0,2\pi]\to \mathcal R : s \mapsto \Gamma(s)$. The function $s \mapsto \int_{\delta_H(p(s))} \vartheta$ is locally constant along $\Gamma$ since $\vartheta$ is a closed 1-form and the initial points $p(s)$ for the construction of the cycles $\delta_H(p(s))$ can be chosen so that these cycles form a cylinder. When $\delta_H(p(s))$ meets a pole of $\vartheta$ then $\int_{\delta_H(p(s))} \vartheta$ is not defined and the function $s \mapsto \int_{\delta_H(p(s))} \vartheta$ has a discontinuity which, because of the local constancy, must be a jump discontinuity. Therefore, $s \mapsto \int_{\delta_H(p(s))} \vartheta$ is a step function. The rotation number $\Theta$ also only has jump discontinuities and these two facts, together with Eq.~\eqref{thph}, imply that $\Phi(\Gamma(s))$ also has only jump discontinuities along $\Gamma$. Therefore, using that all functions involved only have jump discontinuities, we obtain
\begin{align*}
  \var_\Gamma \Theta = \var_\Gamma \Phi - \var_\Gamma \int_{\delta_H} \vartheta.
\end{align*}
Since $\int_{\delta_H} \vartheta$ is a step function we have $\var_\Gamma \int_{\delta_H} \vartheta = 0$, cf.~Example~\ref{ex:lc}. Therefore,
\begin{align*}
  \var_\Gamma \Theta = \var_\Gamma \Phi.
\end{align*}

(b) Suppose that there exists a cycle $\delta$ which is independent of $\gamma_{J,2\pi}$ and satisfies $\int_\delta \vartheta = 0$. The cycle $\delta_H$ can be written as $\delta_H = k_1 \delta + k_2 \gamma_{J,2\pi}$ with $k_1, k_2 \in \mathbb Z$ and $k_1 \ne 0$. Therefore,
\begin{equation*}
  \int_{\delta_H} \vartheta = k_2 \int_{\gamma_{J,2\pi}} \vartheta = 2 \pi k_2,
\end{equation*}
and Eq.~\eqref{thph} gives $\Theta - \Phi = 0 \pmod{2\pi}$. In the opposite direction, we have that $\Theta - \Phi = 0 \pmod{2\pi}$ implies, using Eq.~\eqref{thph}, that $\int_{\delta_H} \vartheta = 2 \pi k$ for some $k \in \mathbb Z$. Then the cycle $\delta = \delta_H - k \gamma_{J,2\pi}$ satisfies $\int_\delta \vartheta = 0$.
\end{proof}

\begin{remark}
  There may be fibers $\EM^{-1}(v)$ for $v \in \mathcal R$ such that $\EM^{-1}(v) \cap \Pi \ne \emptyset$. Thus $\Phi = \int_{\gamma_H} \vartheta$ and $\int_{\delta_H} \vartheta$ are not defined on these fibers. Nevertheless, $\Theta$ is always defined (by construction) and, therefore, the difference $\Phi - \int_{\delta_H} \vartheta$ extends to a well-defined function on such fibers.
\end{remark}

\begin{remark}
  Lemma~\ref{rotation angle and rotation 1-form integral} shows that $\var_\Gamma \Phi$ is independent of the choice of the rotation 1-form $\vartheta$ and always equals $\var_\Gamma \Theta$. This means that we can choose $\vartheta$ in such a way so as to simplify the computation of the variation, even if it does not give the correct value for the rotation number $\Theta$ on each fiber.
\end{remark}

\subsection{Proof of the Main Theorem~\ref{thm/main}}
\label{sec/main-proof}

\begin{figure}
\includegraphics[width=0.75\columnwidth]{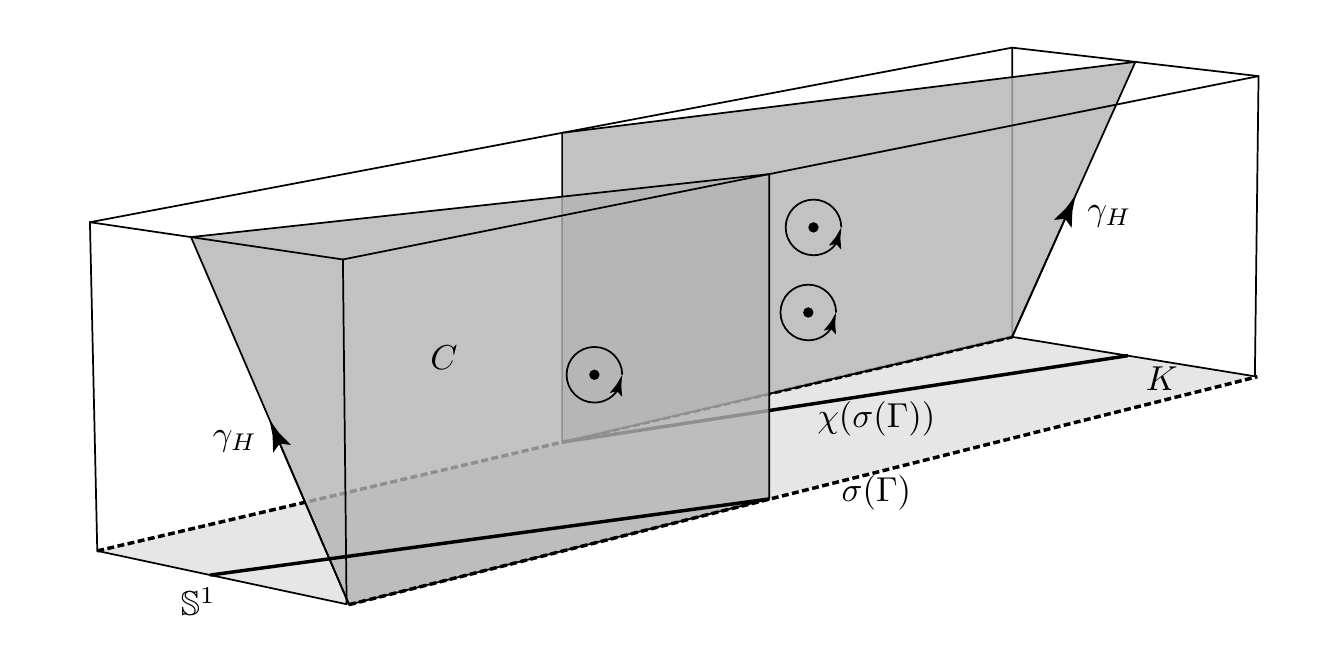}
\caption{The fibration $\EM$ above $\Gamma$ using the same representation as in Figure~\ref{fibration}. The torus $K$ is represented by the lower, light gray, face and consequently also by the opposite upper face. The cylinder $C$ of $X_H$ orbits starting at $\sigma(\Gamma)$ is represented by the dark gray surface. The dashed lines represent the section $\sigma(\Gamma)$. The upper side of $C$ is also drawn with a thicker line on the lower face. Note that the lines marked by $\gamma_H$ represent the same $X_H$ orbit on $\EM^{-1}(\Gamma(0)) = \EM^{-1}(\Gamma(1))$. $C$ intersects the polar set $\Pi$ at a finite number of isolated points $p_{ij}$, cf.~Figure~\ref{fibration}. The cycles $\delta_{ij}$ around $p_{ij}$ are defined on $C$.}
\label{fig/main-thm-proof}
\end{figure}

To prove Theorem~\ref{thm/main} consider a section $\sigma : \Gamma \to \EM^{-1}(\Gamma)$ of the $\mathbb T^2$ bundle over $\Gamma$ and the $2$-torus $K = \{ \mathbb S^1 \sigma(v) : v \in \Gamma \}$. Then consider the cylinder $C$ made up of orbit segments $\gamma_H(\sigma(v))$ of $X_H$ as $v$ moves along $\Gamma$, see Figure~\ref{fig/main-thm-proof}. Specifically, the orbit segment $\gamma_H(\sigma(v))$ of $X_H$ starts at $p = \sigma(v)$ and ends at the first point $p' \in \mathbb S^1 p \subset K$ where the orbit intersects $\mathbb S^1 p$. This construction defines the map $\chi : \sigma(\Gamma) \to K$ sending $p$ to $p'$. In terms of homology classes in $H_1(K,\mathbb Z)$ we have $\chi(\sigma(\Gamma)) = \sigma(\Gamma) + \ell \, [\mathbb S^1]$ for some $\ell \in \mathbb Z$. Here $[\mathbb S^1]$ is the homology class represented by any $X_J$ orbit of period $2\pi$.

Parameterize $\Gamma$ by $s \in [0,1]$ with $s$ increasing along the traversing direction of $\Gamma$ and parameterize each orbit segment $\gamma_H(\sigma(v))$ by $t \in [0,1]$ with $t$ increasing along the flow of $X_H$. Then $C$ is parameterized by $(s,t) \in [0,1] \times [0,1]$ and such choice fixes an orientation on $C$.

Since $\vartheta$ is closed, Stokes' theorem gives that $\int_{\partial C} \vartheta$ equals the sum of the integral of $\vartheta$ along positively oriented cycles $\delta_{ij}$ encircling the poles of $\vartheta$ on $C$,
\begin{align*}
  \int_{\partial C} \vartheta = \sum_{ij} \int_{\delta_{ij}} \vartheta.
\end{align*}
The boundary of $C$ is $\partial C = - \chi(\sigma(\Gamma)) + \sigma(\Gamma)$, therefore
\begin{align*}
  \int_{\partial C} \vartheta = - \int_{\chi(\sigma(\Gamma))} \vartheta + \int_{\sigma(\Gamma)} \vartheta = - \ell \int_{\mathbb S^1} \vartheta = - 2 \ell \pi.
\end{align*}
Moreover, the variation of $\Phi$ along $\Gamma$ is given by 
\begin{align*}
  \var_\Gamma \Phi = - \sum_{ij} \int_{\delta_{ij}} \vartheta,
\end{align*}
giving
\begin{align*}
  \var_\Gamma \Phi = 2 \ell \pi .
\end{align*}
Recall from Equation~\eqref{varTheta} that the monodromy number $k$ equals $- \tfrac{1}{2\pi} \var_\Gamma \Theta $. By Lemma \ref{rotation angle and rotation 1-form integral}, we then have that 
\begin{align*}
  k = - \frac{1}{2\pi} \var_\Gamma \Theta = - \frac{1}{2\pi} \var_\Gamma \Phi = -\ell = \frac{1}{2\pi} \sum_{ij} \int_{\delta_{ij}} \vartheta.
\end{align*}
This concludes the proof of Theorem~\ref{thm/main}.

\section{Examples}
\label{sec/examples}

In this section we apply the concepts introduced in Section~\ref{sec/rotation} to three specific examples: the champagne bottle, the spherical pendulum, and a system on the symplectic manifold $\mathbb S^2 \times \mathbb S^2$.

\subsection{The Champagne Bottle}
\label{sec/champagne}

The champagne bottle consists of a particle in the plane $\mathbb R^2$ which moves under the influence of a conservative force whose potential energy is
\begin{align*}
  V(q_1,q_2) = (q_1^2 + q_2^2)^2 - (q_1^2 + q_2^2).
\end{align*}
The phase space of this system is the cotangent bundle of $\mathbb R^2$, diffeomorphic to $\mathbb R^4$, with the canonical symplectic structure $\omega = dp_1\wedge dq_1 + dp_2 \wedge dq_2$. The Hamiltonian function is $H(q,p) = \frac 12 (p_1^2+p_2^2) + V(q_1,q_2)$.

This system admits the integral of motion $J = q_1 p_2 - q_2 p_1$. The function $J$ is the momentum of the $1:(-1)$ oscillator that rotates clockwise in the $(q_1,q_2)$-plane and counterclockwise in the $(p_1,p_2)$-plane. Its infinitesimal action is the vector field $X_J = q_2 \partial_{q_1} - q_1 \partial_{q_2} - p_2 \partial_{p_1} + p_1 \partial_{p_2}$.

This system admits the global rotation 1-form
\[
\vartheta = \frac{q_1 dq_2 - q_2 d q_1}{q_1^2+q_2^2},
\]
whose polar set $\Pi$ is the plane $q_1 = q_2 = 0$. The transversality condition of Definition \ref{deftrans} is easily verified. In fact, the intersection of $\Pi$ with the critical fiber is only the critical point $(0,0,0,0)$, and the energy-momentum map $F = (H,J)$ restricted to $\Pi$ is the function $(p_1,p_2) \mapsto ((p_1^2+p_2^2)/2,0)$ which has rank 1 at all points of the plane except the critical point. The projection of $\Pi$ in the energy-momentum domain is the $H$ positive semi-axis, see Figure~\ref{ex}.

\begin{figure}
\centering
\includegraphics[width=0.3\linewidth]{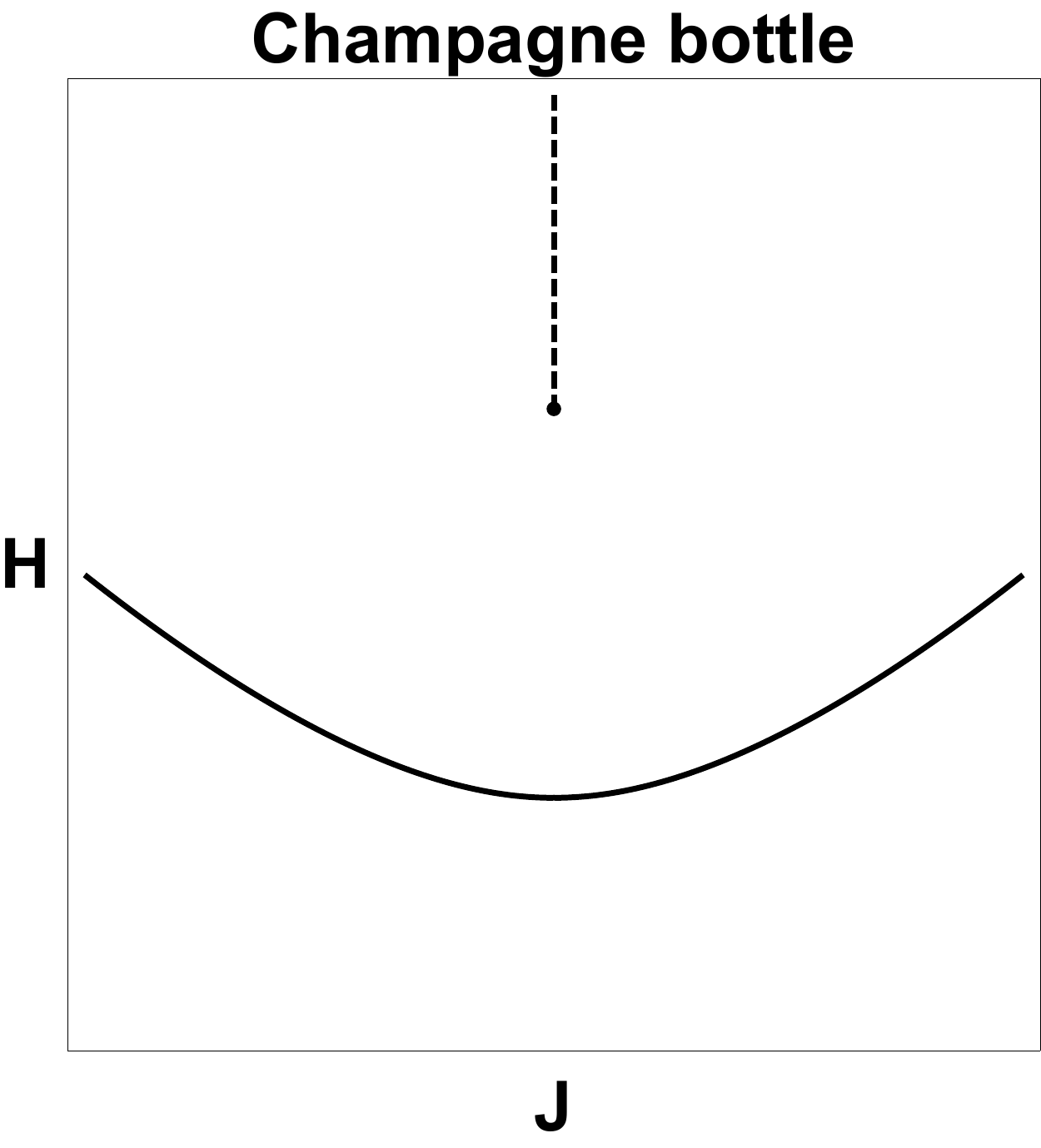}\hfill%
\includegraphics[width=0.3\linewidth]{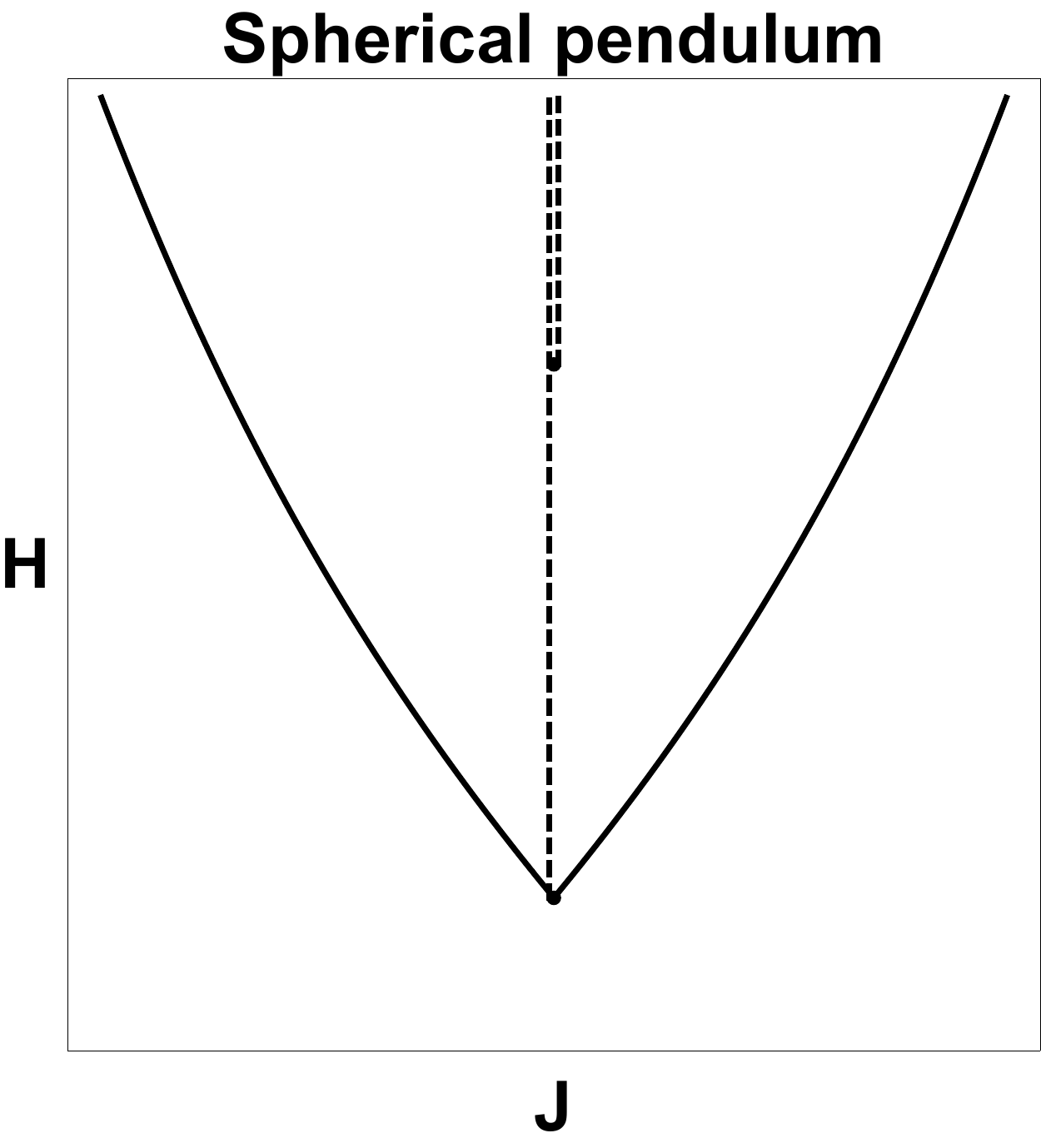}\hfill%
\includegraphics[width=0.3\linewidth]{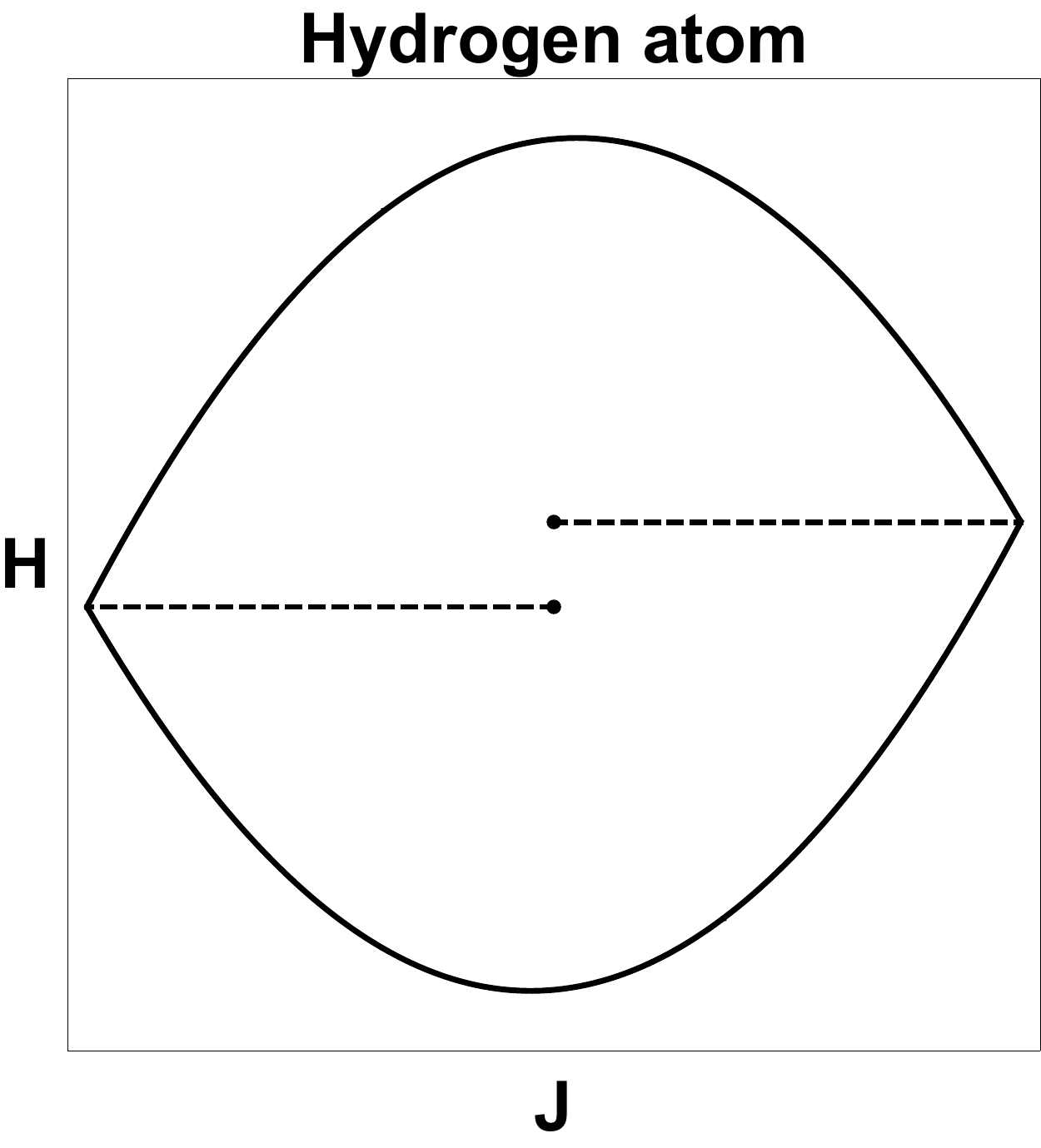}
\caption{The energy-momentum domain for the three examples in
  Section~\ref{sec/examples}. The isolated points correspond to focus-focus
  singularities, the dashed lines are the projection of the domain of $\Pi$, the
  manifold of poles of the chosen rotation 1-form. For the spherical pendulum
  (middle panel) the two lines are both along the $H$-axis; they have been drawn
  slightly shifted to make them both visible.}
\label{ex}
\end{figure}

To compute monodromy by applying Theorem~\ref{thm/main}, consider a closed path $\Gamma$ that encircles the origin in a counterclockwise direction and transversally crosses $\EM(\Pi) = \{(h,j),\,j=0,\,h\ge0\}$ at a point $(\varepsilon,0)$. Then $\Gamma$ is locally parameterized by $(h,j)=(g(s),-s)$ with $g(0)=\varepsilon$. The polar orbit $P = \EM^{-1}(\varepsilon,0) \cap \Pi$ is given by $p_1^2+p_2^2 = 2\varepsilon$ and $q_1^2 + q_2^2 = 0$. A tubular neighborhood $U$ of $P$, contained in $\EM^{-1}(\Gamma)$, admits a chart $(q_1,q_2,\theta)$ where $\theta = \arg(p_1 + i p_2) \in \mathbb S^1$ while $(q_1,q_2)$ lie in a small disk $V$ containing the origin in $\mathbb R^2$. Taking $V$ (and subsequently $U$) sufficiently small and using the transversality condition, the cylinder $C$ of orbits of $X_H$ defined in the proof of Theorem~\ref{thm/main}, Section~\ref{sec/main-proof}, intersects $U$ along a disk that can be parameterized by $(q_1,q_2)$ and $P$ is represented by $q_1=q_2=0$. We have to determine the orientation of the chart $(q_1,q_2)$ with respect to the orientation used in the proof of Theorem~\ref{thm/main}. The latter is defined by $(s,t)$ where $s$ is an increasing parameter along $\Gamma$ and $t$ is time. Then we need to check the determinant
\begin{align*}
  D = \begin{vmatrix}
    \frac{\partial q_1}{\partial s} & \frac{\partial q_1}{\partial s} \\
    \frac{\partial q_1}{\partial t} & \frac{\partial q_1}{\partial t}
  \end{vmatrix}
  = p_2 \frac{\partial q_1}{\partial s} - p_1 \frac{\partial q_1}{\partial s},
\end{align*}
which we can evaluate at $q_1=q_2=0$. Since $\partial q_j / \partial t = \dot{q}_j = p_j$, we find
\begin{align*}
  -1 = \frac{dj}{ds} = D + \left( q_1 \frac{\partial p_2}{\partial s} - q_2 \frac{\partial p_1}{\partial s} \right).
\end{align*}
Evaluating the last relation at $q_1=q_2=0$, gives $D=-1$. This implies that a cycle $\delta$, which is positively oriented on $C$, is negatively oriented in the $(q_1,q_2)$-plane and therefore
\begin{align*}
  k = \frac{1}{2\pi} \int_\delta \vartheta = - 1.
\end{align*}

\subsection{The Spherical Pendulum}
\label{sec/spherical-pendulum}

The spherical pendulum is a Hamiltonian system in the cotangent bundle of the sphere $T^*\mathbb S^2$. This manifold can be symplectically embedded in the cotangent bundle of $\mathbb R^3$, that is diffeomorphic to $\mathbb R^6$ with canonical coordinates $q_i, p_i$, $i=1,2,3$. In these coordinates the Hamiltonian of the system is the restriction to $T^*S^2 = \{ (q,p) \,|\, \|q\|^2 = 1, q\cdot p = 0\}$ of the function $H(q,p) = \frac 12 |p|^2 + q_3$. This Hamiltonian commutes with the function $J = q_1 p_2 - q_2 p_1$.
This system admits the global rotation 1-form
\begin{align*}
\vartheta = \frac{q_1 dq_2 - q_2 d q_1}{q_1^2+q_2^2}.
\end{align*}

In this case the poles of the rotation 1-form are the points satisfying the two equations $q_1 = q_2 = 0$, that form the two planes
\begin{align*}
\Pi_\pm =\{ (0,0,\pm 1,p_1,p_2,0) \,|\, p_1,p_2 \in \mathbb R \}.
\end{align*}
The restriction of the energy-momentum map $\EM=(H,J)$ to the two planes is the function $(p_1,p_2) \mapsto (\tfrac12(p_1^2+ p_2^2) \pm 1, 0)$, which has rank 1 at all points except the poles (defined by $p_1 = p_2 = 0$), which are singular points for the system. The image of this map, that is $\EM(\Pi)$, consists of two rays, subsets of the $H$-axis (see Figure~\ref{ex}).

To compute monodromy in this example we follow the same argument as for the champagne bottle, Section~\ref{sec/champagne}. The main difference is that now $\Gamma$ intersects $\EM(\Pi)$ at two distinct points and $\EM^{-1}(\Gamma)$ contains three polar orbits. Consider a closed path $\Gamma$ that encircles the focus-focus value $(h,j)=(1,0)$ in a counterclockwise direction and transversally crosses $\EM(\Pi) = \{(h,j),\,j=0,\,h\ge-1\}$ at the points $(1\pm\varepsilon,0)$, $\varepsilon > 0$. When $\Gamma$ crosses $\EM(\Pi)$ at $(1+\varepsilon,0)$ it is locally parameterized by $(h,j)=(g(s),-s)$ with $g(0)=1+\varepsilon$. There are two polar orbits $P_\pm$ on $\EM^{-1}(1+\varepsilon,0)$, given by $p_1^2+p_2^2 = 2(1+\varepsilon\mp1)$, $p_3=0$, and $q_1^2 + q_2^2 = 0$, $q_3=\pm1$. Each polar orbit $P_\pm$ admits a tubular neighborhood $U_\pm$, contained in $\EM^{-1}(\Gamma)$. Each $U_\pm$ admits a chart $(q_1,q_2,\theta)$, where $\theta = \arg(p_1 + i p_2) \in \mathbb S^1$. Taking $U_\pm$ sufficiently small and using the transversality condition, the cylinder $C$ of orbits of $X_H$, intersects each of $U_\pm$ along a disk that can be parameterized by $(q_1,q_2)$ and $P_\pm$ is represented by $q_1=q_2=0$. For the orientation we check the determinant
\begin{align*}
  D = \begin{vmatrix}
    \frac{\partial q_1}{\partial s} & \frac{\partial q_1}{\partial s} \\
    \frac{\partial q_1}{\partial t} & \frac{\partial q_1}{\partial t}
  \end{vmatrix}
  = p_2 \frac{\partial q_1}{\partial s} - p_1 \frac{\partial q_1}{\partial s},
\end{align*}
which we can evaluate at $q_1=q_2=0$.
We further have
\begin{align*}
  -1 = \frac{dj}{ds} = D + \left( q_1 \frac{\partial p_2}{\partial s} - q_2 \frac{\partial p_1}{\partial s} \right),
\end{align*}
which, evaluated at $q_1=q_2=0$, gives $D=-1$. This implies that the cycles $\delta_\pm$ should be negatively oriented in the $(q_1,q_2)$-plane and therefore
\begin{align*}
  \int_{\delta_\pm} \vartheta = -2\pi.
\end{align*}
The polar orbit $P_0$ on $\EM^{-1}(1-\varepsilon,0)$ is given by $p_1^2+p_2^2 = 2(2-\varepsilon)$, $p_3=0$, and $q_1^2 + q_2^2 = 0$, $q_3=-1$. Working as above, we find $D = dj/ds = 1$, therefore
\begin{align*}
  \int_{\delta_0} \vartheta = 2\pi.
\end{align*}
In conclusion,
\begin{align*}
  k = \frac{1}{2\pi} \left( \int_{\delta_0} \vartheta + \int_{\delta_+} \vartheta + \int_{\delta_-} \vartheta \right) = - 1.
\end{align*}
Note, in particular, that the polar orbits $P_0$ and $P_-$ that belong to $\EM(\Pi_+)$ are bounded away from the focus-focus point $(0,0,1,0,0,0)$ as $\varepsilon$ goes to zero and their contributions to the monodromy number cancel out. This means that only the polar orbit $P_+$ which approaches the focus-focus point as $\varepsilon$ goes to zero contributes to monodromy. We show that monodromy is locally determined in Section~\ref{sec/local}.

Note that the argument we give here for the spherical pendulum, works in exactly the same way, for more general systems of the form $H(q,p)=\tfrac12|p|^2+V(q_3)$ on $T^*\mathbb S^2$, provided that the path $\Gamma$ lies in the set of regular values of $F$ and it transversally intersects $F(\Pi)$ which is a subset of the $H$-axis. In particular, this includes the case where the system does not have a focus-focus singularity but, instead, a more complicated arrangement of critical values forming an ``island'', see Ref.~\citenum{Efstathiou2005}, Chapter 4.

\subsection{The Hydrogen Atom in Crossed Fields}
\label{sec/hydrogen}

After a first reduction, the hydrogen atom in crossed electric and magnetic fields turns into a Hamiltonian system defined in $\mathbb S^2 \times \mathbb S^2$ that can be embedded into the manifold $\mathbb R^3 \times \mathbb R^3$ endowed with the Poisson structure coming from the Lie algebra $\mathfrak{so}(3)$ (that is $\{x_i,x_j\} = \sum_k \varepsilon_{i,j,k} x_k$ and $\{y_i,y_j\} = \sum_k \varepsilon_{i,j,k} y_k$ where $\varepsilon_{i,j,k}$ is the signature of the permutation $1 \to i$, $2 \to j$ $3\to k$). The phase space $\mathbb S^2 \times \mathbb S^2$ is a symplectic leaf of this space.

The Hamiltonian function for this system is $H = a x_3 + b y_3 + H_2$, with $H_2$ a function of degree two or higher in the variables depending on the parameters $a,b$ (see Ref.~\citenum{Efstathiou2005}, Chapter 3, for a detailed description). This system can be normalized so as to admit an invariance under the Poisson action of $J = x_3 + y_3$, that induces a simultaneous clockwise rotation in the two copies of $\mathbb R^3$ about the $x_3$ and $y_3$ axes respectively. In this case, regardless of the choice of $H_2$, a global rotation 1-form is
\begin{align*}
  \vartheta = \frac{x_1 dx_2 - x_2 d x_1}{x_1^2+x_2^2},
\end{align*}
and its poles correspond to two submanifolds $\Pi_\pm = (0,0,\pm 1) \times \mathbb S^2$.

Theorem \ref{thm/main} applies. We illustrate the application considering a toy model with $H = a x_3 + x_1 y_2 - x_2 y_1$. The restriction of the function $F =(H,J)$ to $\Pi_\pm$ is the function $(\pm a, \pm 1 + y_3)$, which has rank 1 and projects onto two horizontal lines connecting each focus-focus critical value to the elliptic-elliptic critical value at the same height (see Figure~\ref{ex}). {\red The addition of the term $by_3$ to $H$ or the choice of a different $H_2$ term would deform this picture but leave it qualitatively the same (as long as the quadratic part does not cause a bifurcation of the system).

To apply Theorem~\ref{thm/main}, consider a closed path $\Gamma$ that encircles both focus-focus values in a counterclockwise direction and transversally crosses $\EM(\Pi)$ at the points $\pm (a,\varepsilon)$. Near each point $\pm (a,\varepsilon)$ the path  $\Gamma$ is locally parameterized by $(h,j)=(\pm s, g_\pm(s))$ with $g_\pm(0)=\pm \varepsilon$. The polar orbits $P_\pm$ on $\EM^{-1}(\pm(a,\varepsilon))$ are given by $y_1^2+y_2^2 = $ and $x_1^2 + x_2^2 = 0$, $x_3 = \pm1$. Tubular neighborhoods $U_\pm$ of $P_\pm$, contained in $\EM^{-1}(\Gamma)$, admit charts $(x_1,x_2,\theta)$ where $\theta = \arg(y_1 + i y_2) \in \mathbb S^1$. Taking $U_\pm$ sufficiently small and using the transversality condition, the cylinder $C$ of orbits of $X_H$ defined in the proof of Theorem~\ref{thm/main}, Section~\ref{sec/main-proof}, intersects $U_\pm$ along disks that can be parameterized by $(q_1,q_2)$ and $P_\pm$ are represented by $q_1=q_2=0$. We have to determine the orientation of the charts $(q_1,q_2)$ with respect to the orientation used in the proof of Theorem~\ref{thm/main}. Then we need to check the determinant
\begin{align*}
  D = \begin{vmatrix}
    \frac{\partial x_1}{\partial s} & \frac{\partial x_2}{\partial s} \\
    \frac{\partial x_1}{\partial t} & \frac{\partial x_2}{\partial t}
  \end{vmatrix}
  = (\mp y_2) \frac{\partial x_1}{\partial s} - (\mp y_1) \frac{\partial x_2}{\partial s},
\end{align*}
which is being evaluated at $x_1=x_2=0$.
We further have
\begin{align*}
  \pm 1 = \frac{dh}{ds} = \mp D.
\end{align*}
Therefore, in both cases $D=-1$ and we get
\begin{align*}
  k = \frac{1}{2\pi} \left( \int_{\delta_+} \vartheta + \int_{\delta_-} \vartheta \right) = - 2.
\end{align*}

\section{Local Monodromy}
\label{sec/local}

We now concentrate on a point $\bar p \in \mathcal M$ that is a focus-focus singularity \cite{Zung1997, Bolsinov2004}. Such singularities are isolated, rank-zero, singularities of $\EM$. This implies that $X_J(\bar p) = X_H(\bar p) = 0$. Moreover, we assume that the fiber containing $\bar p$ is a singly pinched torus, that is, the only critical point on this fiber is $\bar p$. Under these assumptions we show that the main theorem, Theorem~\ref{thm/main}, can be applied locally to determine the monodromy number $k$ near $\bar v = \EM(\bar p)$. In particular, we show that $k$ can be determined by restricting our attention to a non-saturated neighborhood of $\bar p$. We then make a specific choice of the rotation $1$-form and we use it to compute that $k=-1$.

\subsection{Local Fibration and its Complement}

Let us consider the fibration induced by $F$ in a neighborhood of a focus-focus point $\bar p$ of a 2 degree of freedom Hamiltonian system. Let $\bar v = \EM(\bar p)$ and denote by $\rho$ the reduction map of the Hamiltonian $\mathbb S^1$-action induced by the flow of $X_J$. Finally, let $f : \mathcal M / \mathbb S^1 \to \mathbb R^2$ denote the reduced energy-momentum map, satisfying $\EM = f \circ \rho$. We assume that the fiber $\EM^{-1}(\bar v)$ is a singly pinched torus, which has a saturated neighborhood that contains no other critical points. Then $\EM^{-1}(\bar v) \setminus \{\bar p\}$ is homeomorphic to $\setS^1 \times (0,1)$.

\begin{remark}
  Note that in order to simplify the exposition we consider here only the case where $\EM^{-1}(\bar v)$ contains exactly one focus-focus point. Nevertheless, our approach easily generalizes to the case where $\EM^{-1}(\bar v)$ contains more than one focus-focus point.
\end{remark}

The fibration induced by $f$ onto a neighborhood of $\bar v$ can be decomposed in two parts: a local part defined in a neighborhood $\widetilde B$ of $\rho(\bar p)$, and a part defined in its complement $\widetilde {\mathcal M} \setminus \widetilde B$ where $\widetilde {\mathcal M} = \mathcal M / \mathbb S^1$. We call the first part ``local'' and, with some abuse in terminology, we call the last part ``global''. In these regions the foliation induced by $f$ has the simple structure described in the following proposition, see Figure~\ref{fig/reduced-fibration}.

\begin{figure}
  \includegraphics[width=0.6\columnwidth]{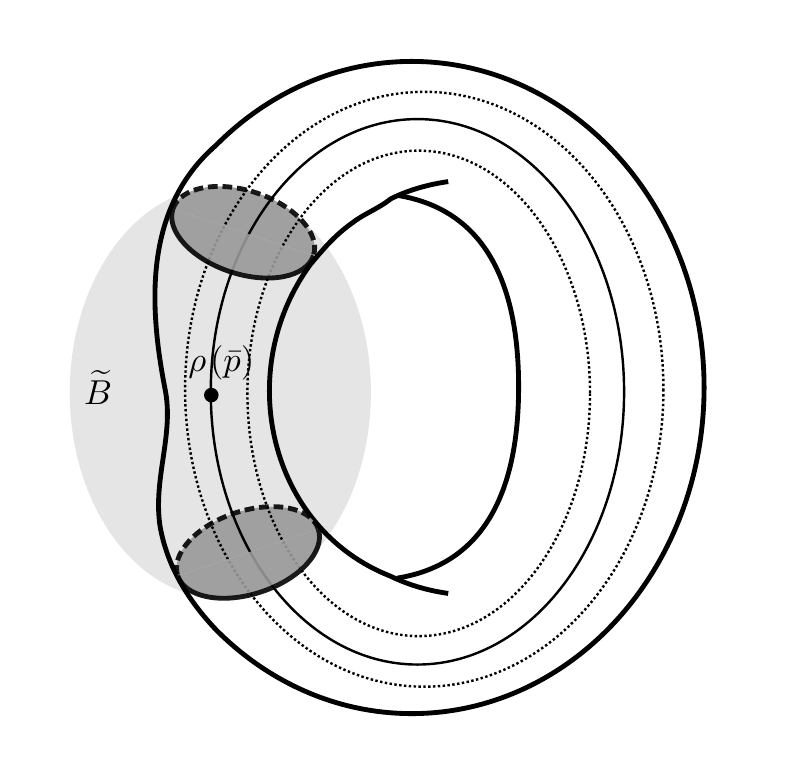}
  \caption{The reduced fibration given by $f$, see Proposition~\ref{local fibration reduced space}. The dotted curves represent regular reduced fibres while the solid curve going through the point $\rho(\bar p)$ represents the reduced pinched torus $f^{-1}(\bar v)$.}
  \label{fig/reduced-fibration}
\end{figure}

\begin{proposition}\label{local fibration reduced space}
  Assume that $\bar p$ is a possibly degenerate focus-focus point and $\bar v = \EM(\bar p)$. Then there exists an open neighborhood $\widetilde B$ of $\rho(\bar p)$ and an open neighborhood $D$ of $\bar v$ such that:
  \begin{enumerate}[label={(\alph*)}]
  \item the fibration
    $f^{-1} (D) \cap \partial \widetilde B \overset{f}{\longrightarrow} D$
    is isomorphic to
    $D \times ( \{ pt \} \sqcup \{ pt \} ) \overset{\mathrm{pr}_1}{\longrightarrow} D$,
    where $\mathrm{pr}_1$ denotes the projection to the first component;

  \item the fibration
    $f^{-1}(D) \setminus \widetilde B \overset{f}{\longrightarrow} D$
    is isomorphic to
    $D \times [0,1] \overset{\mathrm{pr}_1}{\longrightarrow} D$;

  \item the fibration
    $f^{-1}(D^*) \cap \closure{\widetilde B} \overset{f}{\longrightarrow} D^*$ is isomorphic to
    $D^* \times [0,1] \overset{\mathrm{pr}_1}{\longrightarrow} D^*$, where $D^* = D \setminus \{\bar v\}$.
  \end{enumerate}
\end{proposition}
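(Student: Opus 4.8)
The plan is to realize all three fibrations as proper submersions and apply Ehresmann's fibration theorem, the only real work being to pin down the topology of the reduced fibres near $\rho(\bar p)$ and to control the possible monodromy over the punctured disk in item (c). Since the $\mathbb S^1$ action is free on a punctured neighbourhood of $\bar p$, the reduced space $\widetilde{\mathcal M}=\mathcal M/\mathbb S^1$ is a smooth $3$-manifold away from $\rho(\bar p)$, with at most a conical singularity at $\rho(\bar p)$, and $f$ is smooth there with $F=f\circ\rho$. Because $\bar p$ is the only critical point of $F$ in the chosen saturated neighbourhood and the action is free off $\bar p$, the map $f$ is a submersion on $\widetilde{\mathcal M}\setminus\{\rho(\bar p)\}$ once $D$ is taken small enough that $\bar v$ is its only critical value and $\bar p$ its only critical preimage. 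Near $\rho(\bar p)$ I would use the model action $(z,w)\mapsto(e^{\ii t}z,e^{-\ii t}w)$: a short computation with the invariants $\rho_1=|z|^2$, $\rho_2=|w|^2$, $zw=\xi+\ii\eta$ identifies $\widetilde{\mathcal M}$ near $\rho(\bar p)$ with $\mathbb R^3$ in coordinates $(j,\xi,\eta)$, where $j=\tfrac12(\rho_1-\rho_2)$, and presents $f$ to leading order as $(j,\xi,\eta)\mapsto(\xi,j)$. Thus the reduced fibres are the $\eta$-lines, the critical one being the single embedded arc through $\rho(\bar p)=(0,0,0)$ and the nearby ones embedded arcs missing $\rho(\bar p)$. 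I would then fix a small ball $\widetilde B$ about $\rho(\bar p)$ whose boundary sphere is transverse to all these fibres, so that each reduced fibre over $D$ meets $\partial\widetilde B$ transversally in exactly two points.

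With this picture the three items become submersions of manifolds-with-boundary over $D$ or $D^*$. For (a), $f$ restricted to $f^{-1}(D)\cap\partial\widetilde B$ is a proper submersion with zero-dimensional fibre consisting of the two transverse intersection points; since $D$ is contractible the bundle is trivial, $f^{-1}(D)\cap\partial\widetilde B\cong D\times(\{pt\}\sqcup\{pt\})$, and in particular the two intersection points organize into two globally distinct sections over $D$. For (b), the region $f^{-1}(D)\setminus\widetilde B$ contains no critical point of $f$, since the only one, $\rho(\bar p)$, lies in $\widetilde B$; hence $f$ is a proper submersion there whose fibres are the closed outer arcs, each an interval, and Ehresmann's theorem over the contractible base $D$ gives the trivialization $D\times[0,1]$.

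For (c) the same reasoning shows that $f$ is a proper submersion on $f^{-1}(D^*)\cap\closure{\widetilde B}$ with interval fibres, the critical value $\bar v$, whose fibre runs through the reduced singular point $\rho(\bar p)$, being precisely what is removed, so Ehresmann's theorem yields a \emph{locally} trivial $[0,1]$-bundle over $D^*$. The obstacle here is that $D^*$ is not simply connected, so a priori the monodromy around $\bar v$ could act on the fibre $[0,1]$ by the non-trivial element of its mapping class group, that is, by interchanging the two endpoints. This is exactly where item (a) is used: the endpoints of the interval fibres of (c) are the intersection points with $\partial\widetilde B$, which by (a) extend to two globally distinct sections over the whole disk $D$ and hence over $D^*$, so a loop around $\bar v$ cannot swap them. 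An $I$-bundle whose two boundary sections are globally distinguished is the orientable, trivial one, whence the bundle in (c) is isomorphic to $D^*\times[0,1]$ and the proof is complete. I expect the genuinely delicate point to be the fibre-topology step for a \emph{degenerate} focus-focus point, where the explicit model is unavailable; there one must instead read the ``single arc through $\rho(\bar p)$, two transverse boundary points'' picture off the hypothesis $F^{-1}(\bar v)\setminus\{\bar p\}\cong\mathbb S^1\times(0,1)$ together with freeness of the action, ensuring that the reduced critical fibre and its neighbours retain the same qualitative structure.
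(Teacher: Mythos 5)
Your proposal is correct, and parts (a) and (b) run essentially as in the paper (transverse two-point intersection with $\partial\widetilde B$, then triviality over the contractible disk $D$); but it differs from the paper's proof in two substantive ways. First, the paper never invokes the linearized model action $(z,w)\mapsto(e^{it}z,e^{-it}w)$: it reads the fibre topology — a smooth curve $f^{-1}(\bar v)\setminus\{\rho(\bar p)\}$ whose two ends meet at $\rho(\bar p)$, hence transverse intersection of a small sphere $\partial\widetilde B$ in exactly two points — directly off the standing hypothesis that $\EM^{-1}(\bar v)$ is a singly pinched torus, so degenerate focus-focus points are covered with no extra work. In your write-up the model computation proves the proposition only in the nondegenerate case, and the fallback you sketch in the last sentences (which is exactly the paper's argument) is what actually carries the stated generality; so the model is best viewed as an illustration rather than the proof. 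Second, for item (c) — the only part where triviality does not follow from contractibility of the base — the two arguments genuinely diverge. The paper asserts that the $[0,1]$-bundle over $D^*$ is \emph{orientable}, exhibiting the frame $\nabla h$, $\nabla j$, $X_h$ to orient the total space and the base and hence the fibres, and concludes triviality from orientability plus contractibility of the fibre. You instead rule out the Möbius-type (endpoint-swapping) monodromy by noting that the endpoint two-point bundle of (c) is the restriction to $D^*$ of the bundle in (a), which is trivial over all of $D$; since the orientation-preserving homeomorphisms of $[0,1]$ form a contractible group, non-swapping monodromy forces triviality. Your argument is more elementary and purely topological, and it makes explicit the logical dependence of (c) on (a), which the paper leaves implicit; the paper's frame argument is shorter but relies on the smooth and orientation structure of the reduced space. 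Both are valid proofs of the same key point.
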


\begin{proof}

  \noindent (a) Since $\EM^{-1}(\bar v) \setminus \{ \bar p \}$ is homeomorphic to a cylinder, its $\mathbb S^1$ reduction, $f^{-1}(\bar v) \setminus \{ \rho(\bar p) \}$ gives an interval $(0,1)$ whose endpoints meet at $\rho(\bar p)$. Moreover, because all objects involved are smooth we have that $f^{-1}(\bar v) \setminus \{ \rho(\bar p) \}$ is a smooth curve in the reduced space. This implies that for any sufficiently small ball $\widetilde B$ around $\rho(\bar p)$ the fiber $f^{-1}(\bar v)$ intersects $\partial \widetilde B$ transversally at exactly two points. Since $f$ is smooth we conclude that for a sufficiently small disk $D$ containing $\bar v$, all fibers $f^{-1}(v)$ for $v \in D$ also intersect $\partial \widetilde B$ transversally at two points.

  \noindent (b) Furthermore, for any $v \in D^* = D \setminus \{ \bar v \}$ we have that $f^{-1}(v)$ is diffeomorphic to $\mathbb S^1$, while as we saw earlier $f^{-1}(\bar v)$ is homeomorphic to $\mathbb S^1$ but smooth outside $\rho(\bar p)$. This implies that for any $v \in D$, $f^{-1}(v) \setminus \widetilde B$ is diffeomorphic to the interval $[0,1]$ and since $D$ is contractible $f^{-1}(D) \setminus \widetilde B$ is isomorphic to $D \times [0,1]$.

  \noindent (c) Finally, $f^{-1}(D^*) \cap \closure{\widetilde B}$ is an orientable fibration with contractible fiber $[0,1]$ over the punctured disk $D^*$. Therefore, it is isomorphic to $D^* \times [0,1]$. The orientability can be explicitly demonstrated by considering the basis $\nabla h$, $\nabla j$, $X_h$ where $f = (h,j)$ and $X_h(\rho(p)) = D\rho(p) X_H(p)$.
\end{proof}

The fibration described in Proposition~\ref{local fibration reduced space} is the projection on $\widetilde{\mathcal M}$ of a fibration of higher dimension defined by the energy-momentum map $\EM$ in the phase space $\mathcal M$. Also in this case the fibers projecting onto a neighborhood of $\bar v$ can be decomposed in a local part and in a global part, and their geometry remains simple.

\begin{proposition}\label{local fibration full space}
  There exists an $\mathbb S^1$-invariant open neighborhood $B$ of the focus-focus point $\bar p$ and an open neighborhood $D$ of $\bar v$ such that:
  \begin{enumerate}[label={(\alph*)}]
  \item the fibration
    $\EM^{-1} (D) \cap \partial B \overset{\EM}{\longrightarrow} D$ is isomorphic to
    $D \times (\mathbb S^1 \sqcup \mathbb S^1) \overset{\mathrm{pr}_1}{\longrightarrow} D$;

  \item the fibration
    $\EM^{-1}(D) \setminus B \overset{\EM}{\longrightarrow} D$ is isomorphic to
    $D \times \mathrm{Cyl} \overset{\mathrm{pr}_1}{\longrightarrow} D$
    where $\mathrm{Cyl}$ is the cylinder $\mathbb S^1 \times [0,1]$;

  \item the fibration $\EM^{-1}(D^*) \cap \closure{B} \overset{\EM}{\longrightarrow} D^*$ is isomorphic to
    $D^* \times \mathrm{Cyl} \overset{\mathrm{pr}_1}{\longrightarrow} D^*$, where $D^* = D \setminus \{\bar v\}$.
  \end{enumerate}
\end{proposition}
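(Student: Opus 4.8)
The plan is to deduce this proposition from its reduced counterpart, Proposition~\ref{local fibration reduced space}, by lifting along the reduction map $\rho$, which away from the unique fixed point $\bar p$ is a principal $\mathbb{S}^1$-bundle. First I would fix $D$ to be the disk supplied by Proposition~\ref{local fibration reduced space} and set $B = \rho^{-1}(\widetilde B)$, where $\widetilde B$ is the ball about $\rho(\bar p)$ from that proposition. Being $\rho$-saturated, $B$ is automatically $\mathbb{S}^1$-invariant, it is open because $\rho$ is an open map, and $\bar p \in B$ since $\rho(\bar p) \in \widetilde B$; moreover $\partial B$ and $\closure{B}$ are the $\rho$-preimages of $\partial \widetilde B$ and $\closure{\widetilde B}$ on the free part of the action. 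Each of the three regions in (a)--(c) is then the $\rho$-preimage of the corresponding reduced region, so it suffices to show that $\rho$ restricts to a \emph{trivial} principal circle bundle over each reduced region and to pull back the product trivializations already produced in Proposition~\ref{local fibration reduced space}.

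The key point is that in all three cases the region stays away from $\bar p$, so the action is free and $\rho$ is a genuine principal $\mathbb{S}^1$-bundle there: in (a) the set $\partial B$ bounds a neighborhood of $\bar p$ and hence misses it; in (b) we have removed $B \ni \bar p$; and in (c), although $\closure{B}$ contains $\bar p$, the set $\EM^{-1}(D^*)$ omits the singular fiber $\EM^{-1}(\bar v) \ni \bar p$. Since a principal $\mathbb{S}^1$-bundle is classified up to isomorphism by its first Chern class in $H^2$ of the base, triviality follows once that group vanishes. For (a) the base is $f^{-1}(D) \cap \partial \widetilde B \cong D \times (\{pt\} \sqcup \{pt\})$, a disjoint union of two contractible disks, and for (b) it is $f^{-1}(D) \setminus \widetilde B \cong D \times [0,1]$, again contractible; in both cases $H^2$ vanishes. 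Pulling back the reduced product through the resulting trivial circle bundle turns $D \times \{pt\}$ into $D \times \mathbb{S}^1$ and $D \times [0,1]$ into $D \times (\mathbb{S}^1 \times [0,1]) = D \times \mathrm{Cyl}$, which is exactly (a) and (b).

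The delicate case, and where I expect the only real obstacle, is (c). Here the base of the circle bundle is $f^{-1}(D^*) \cap \closure{\widetilde B} \cong D^* \times [0,1]$, and since $D^*$ is a \emph{punctured} disk this space is homotopy equivalent to $\mathbb{S}^1$ rather than contractible; a priori the circle bundle could be twisted, and this is precisely the locus where one might worry that monodromy is obstructed. The resolution is that even over a base homotopy equivalent to $\mathbb{S}^1$ the classifying group $H^2(\mathbb{S}^1;\mathbb{Z})$ still vanishes, so the bundle is trivial. Pulling the reduced trivialization $f^{-1}(D^*) \cap \closure{\widetilde B} \cong D^* \times [0,1]$ back through this trivial circle bundle yields $\EM^{-1}(D^*) \cap \closure{B} \cong D^* \times ([0,1] \times \mathbb{S}^1) = D^* \times \mathrm{Cyl}$, establishing (c). I would emphasize in closing that no single piece is twisted: the nontriviality of the full torus bundle is produced only by the way the trivial local piece (c) and the trivial global piece (b) are glued along the two boundary circles of (a), consistent with the local nature of monodromy developed in the rest of this section.
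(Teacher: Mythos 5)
Your proof is correct, and it follows the route the paper intends: the paper actually states Proposition~\ref{local fibration full space} with no proof at all, presenting it as the unreduced counterpart of Proposition~\ref{local fibration reduced space}, and your lifting argument --- realizing $B$, $\partial B$, $\closure{B}$ as the $\rho$-preimages of $\widetilde B$, $\partial\widetilde B$, $\closure{\widetilde B}$, noting each region avoids $\bar p$ so that $\rho$ is a principal $\mathbb S^1$-bundle there, and trivializing that bundle because $H^2$ of the base vanishes (including the homotopy-$\mathbb S^1$ base $D^*\times[0,1]$ in case (c)) --- supplies exactly the details the paper leaves implicit. The only hypothesis you use beyond what is stated is that the $\mathbb S^1$ action is free near the singular fiber away from $\bar p$; this is standard for focus-focus points (the linearized action has weights $\pm 1$) and is assumed elsewhere in the paper, so it is harmless.
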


\begin{proposition}\label{local transversality}
  If the rotation $1$-form $\vartheta$ is transversal to $\EM$ then $B$ and $D$ can be chosen so that $(F^{-1}(D) \cap \partial B) \cap \Pi = \emptyset$.
\end{proposition}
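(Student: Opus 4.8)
The plan is to exploit the transversality of $\vartheta$ to $\EM$, which says (Definition~\ref{deftrans}) that $\EM|_\Pi$ has rank $1$ outside fixed points of the $\mathbb S^1$ action. By Lemma~\ref{lemma/transversality} this means $\EM(\Pi_r)$ is a smooth open one-dimensional manifold in a neighborhood of $\bar v$. The key observation is that the intersection $\Pi \cap \partial B$ is governed by how $\EM(\Pi)$ meets $\bar v$, so I would first reduce the statement to a local assertion about the image curve $\EM(\Pi)$ near $\bar v$ and then lift it back to the boundary sphere $\partial B$.

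First I would set up the local picture. Fix $B$ and $D$ as provided by Proposition~\ref{local fibration full space}, so that $\EM^{-1}(D) \cap \partial B \overset{\EM}{\to} D$ is the trivial bundle $D \times (\mathbb S^1 \sqcup \mathbb S^1)$. Any point of $\Pi \cap (\EM^{-1}(D) \cap \partial B)$ must lie over a value in $\EM(\Pi_r) \cap D$, since on $\partial B$ the $\mathbb S^1$-action is free (the only fixed point $\bar p$ is interior). By transversality $\EM(\Pi_r)$ is a one-dimensional immersed curve through (or accumulating at) $\bar v$; the essential point is that as we shrink $B$, the polar orbits $\mathbb S^1 p_{ij}$ lying on $\EM^{-1}(v) \cap \Pi_r$ are pushed off the boundary sphere. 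Concretely, I would argue that the reduced polar set $\Pi_r / \mathbb S^1$ is a one-dimensional submanifold of $\widetilde{\mathcal M}$ whose image under $f$ is the curve $\EM(\Pi_r)$, and that this curve meets $f^{-1}(\bar v)$ only at the reduced pinched point $\rho(\bar p)$ or at points in the interior $\widetilde B$, never on $\partial \widetilde B$ for $\widetilde B$ sufficiently small.

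The mechanism I would use is a compactness-plus-transversality argument on the reduced space. Since $f^{-1}(\bar v)$ intersects $\partial \widetilde B$ transversally at exactly two points (Proposition~\ref{local fibration reduced space}(a)) and these two points are regular values for $f$ restricted to the relevant strata, the reduced polar curve $\Pi_r/\mathbb S^1$, being one-dimensional and transversal to the fibration, can be arranged to avoid a neighborhood of $\partial \widetilde B \cap f^{-1}(\bar v)$. More precisely: if $\Pi_r/\mathbb S^1$ accumulated on $\partial\widetilde B$ as we shrank $\widetilde B$, then by transversality of $f|_{\Pi_r/\mathbb S^1}$ we could find a sequence of intersection points converging to a point of $f^{-1}(\bar v) \cap \partial\widetilde B$ that is neither $\rho(\bar p)$ nor a fixed point, contradicting either the discreteness guaranteed by Lemma~\ref{lemma/transversality} (finitely many orbits per fiber) or the transversal two-point intersection structure. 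Lifting back through $\rho$, the absence of reduced poles on $\partial\widetilde B$ gives $(\EM^{-1}(D) \cap \partial B) \cap \Pi = \emptyset$ after correspondingly shrinking $D$ to control $\EM(\Pi) \cap D$.

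The main obstacle I expect is handling the behaviour of the polar curve $\EM(\Pi)$ as it approaches the focus-focus value $\bar v$: in the examples (e.g. the spherical pendulum) the image $\EM(\Pi)$ is a ray with endpoint exactly at $\bar v$, so one cannot simply say $\bar v \notin \EM(\Pi)$. The delicate point is therefore not avoiding the value $\bar v$ in the base but avoiding the boundary sphere $\partial B$ upstairs; I must ensure that the polar orbits on fibers $\EM^{-1}(v)$, $v$ near $\bar v$, sit in the interior of $B$ (tending toward $\bar p$) rather than near $\partial B$. This requires using that the fiber decomposition of Proposition~\ref{local fibration full space} is uniform in $v \in D$, so that the cylinder $\EM^{-1}(D) \cap \partial B$ stays a bounded distance from the pinch $\bar p$; since the polar orbits that matter approach $\bar p$, they must eventually leave the region $\partial B$. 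Making this \emph{uniform} separation precise, so that a single choice of $B$ and $D$ works simultaneously for all nearby values, is the step that needs the most care.
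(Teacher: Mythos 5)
Your two-step skeleton---first choose $B$ so that $\partial B$ avoids the poles lying on the singular fiber, then shrink $D$ so that poles on nearby fibers cannot reach $\partial B$---is the same as the paper's proof, which does this in two sentences: transversality is invoked to find a small ball $B$ with $F^{-1}(\bar v)\cap B\cap\Pi=\{\bar p\}$, and then a sufficiently small disk $D\ni\bar v$ works. However, your central mechanism, the ``more precisely'' contradiction, is a non sequitur. If the balls $\widetilde B$ shrink, intersection points of the reduced polar set with the shrinking boundaries $\partial\widetilde B$ can only accumulate at $\rho(\bar p)$ (the point the balls shrink to), not at ``a point of $f^{-1}(\bar v)\cap\partial\widetilde B$ that is neither $\rho(\bar p)$ nor a fixed point.'' And for a fixed $\widetilde B$, exhibiting a polar point $x_*\in f^{-1}(\bar v)\cap\partial\widetilde B$ contradicts neither Lemma~\ref{lemma/transversality}---finitely many polar orbits per fiber is perfectly compatible with one of them lying exactly on $\partial\widetilde B$---nor Proposition~\ref{local fibration reduced space}(a), which says $f^{-1}(\bar v)$ crosses $\partial\widetilde B$ transversally at two points but in no way forbids either crossing point from being a pole. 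The only thing such an $x_*$ contradicts is the assertion that $\partial\widetilde B$ had already been chosen to miss the poles of the singular fiber, i.e.\ the very claim you are trying to establish; the argument is circular.

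What is actually needed, and what the paper extracts from transversality as its first step, is the statement that $\bar p$ is isolated in $\Pi\cap F^{-1}(\bar v)$, so that a sphere of sufficiently small radius satisfies $\Pi\cap F^{-1}(\bar v)\cap\partial B=\emptyset$. Your closing paragraph correctly identifies this as the delicate point (the relevant poles tend to $\bar p$, and $F(\Pi)$ may be a ray with endpoint $\bar v$, as in the spherical pendulum), but then explicitly defers it as ``the step that needs the most care,'' so the proposal stops short of the proposition's actual content. The second step should then be stated as the routine compactness argument it is: $\Pi$ is closed (the complement of the open domain of $\vartheta$) and $\partial B$ is compact, so if every disk $D_n$ shrinking to $\{\bar v\}$ produced a pole on $F^{-1}(D_n)\cap\partial B$, a subsequence of such poles would converge to a point of $\Pi\cap F^{-1}(\bar v)\cap\partial B$, which the choice of $B$ excludes. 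With these two steps made explicit your outline becomes the paper's proof; without them there is a genuine gap.
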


\begin{proof}
  Recall that $\bar p$, being a fixed point of the $\mathbb S^1$ action, is a pole of $\vartheta$. The transversality condition ensures that we can find a sufficiently small ball $B$ such that $\EM^{-1}(\bar v) \cap B$ contains no other poles of $\vartheta$. Then $\EM^{-1}(D) \cap \partial B$ also contains no poles for a sufficiently small disk $D \ni \bar v$.
\end{proof}

\subsection{Local Variation and Monodromy}
\label{sec/lvam}

Consider the fiber $\EM^{-1}(v)$ for $v$ sufficiently close to $\bar v$ and recall from Proposition~\ref{local fibration full space} that $\EM^{-1}(v) \cap \partial B$ is the disjoint union of two $\mathbb S^1$ orbits $S_- \sqcup S_+$. We make the convention that the flow of $X_H$ in $\EM^{-1}(v) \cap B \simeq \mathrm{Cyl}$ sends points on $S_-$ to $S_+$. For any point $p \in \EM^{-1}(v) \cap B$ let $\gamma_H^\rel(p)$ be the part of the orbit of $X_H$ in $B$ that goes through $p$. Such curve joins a point $p_- \in S_-$ to a point $p_+ \in S_+$. Define
\begin{align}\label{def-Phi-rel}
  \Phi_\rel(v) = \int_{\gamma_H^\rel(p)} \vartheta,
\end{align}
where $\vartheta$ is a rotation 1-form defined in $B \setminus \Pi$ but not necessarily defined globally. In typical situations, $\vartheta$ is the restriction of a global rotation 1-form $\vartheta$ to a neighborhood of the focus-focus point $\bar p$. We further assume that $\vartheta$ is transversal to $\EM$. Note that $\Phi_\rel(v)$ does not depend on the choice of $p \in \EM^{-1}(v)$.

\begin{proposition}\label{local}
  The variation $\var_\Gamma \Phi_\rel$ is equal to $-2k\pi$, where $k \in \mathbb Z$ is the monodromy number of the torus bundle $\EM^{-1}(\Gamma)$.
\end{proposition}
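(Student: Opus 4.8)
The plan is to reduce the local statement to the global Theorem~\ref{thm/main} by showing that the relative integral $\Phi_\rel$ and the globally defined $\Phi$ differ only by a step function along $\Gamma$, so that their variations coincide. First I would recall from Proposition~\ref{local fibration full space}(c) that $\EM^{-1}(D^*) \cap \closure{B}$ is isomorphic to $D^* \times \mathrm{Cyl}$, which gives a well-defined section $S_-$ of entry points and $S_+$ of exit points for the flow of $X_H$ inside $B$. The curve $\gamma_H^\rel(p)$ is precisely the portion of the full orbit segment $\gamma_H(p)$ that lies in $B$. I would then split the full orbit segment defining $\Phi$ into its part inside $B$ and its part outside $B$, writing $\Phi(v) = \Phi_\rel(v) + \Phi_\glob(v)$, where $\Phi_\glob$ is the integral of $\vartheta$ along the complementary arc in $\EM^{-1}(D)\setminus B$.

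The key observation is that, by Proposition~\ref{local transversality}, we may choose $B$ and $D$ so that $(\EM^{-1}(D)\cap \partial B)\cap \Pi = \emptyset$, and by Proposition~\ref{local fibration full space}(b) the complement $\EM^{-1}(D)\setminus B$ is isomorphic to $D \times \mathrm{Cyl}$ and contains no poles of $\vartheta$ near the relevant arcs. The next step is therefore to argue that the complementary contribution $\Phi_\glob$, being the integral of the closed form $\vartheta$ over arcs lying in a region where $\vartheta$ is regular and the fibration is trivial, can jump only by integer multiples of $2\pi$ as $v$ traverses $\Gamma$; that is, $\Phi_\glob$ is a step function along $\Gamma$. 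Concretely, I would form the cylinder of $X_H$-orbit arcs over $\sigma(\Gamma)\setminus B$ exactly as in the proof of Theorem~\ref{thm/main}, apply Stokes' theorem, and note that since this region meets $\Pi$ only along full $X_J$-orbits (and can be arranged to avoid them or pick up only $2\pi$ multiples), the only discontinuities of $\Phi_\glob$ are jumps by multiples of $2\pi$. By the same reasoning used in Example~\ref{ex:lc}, any step function on $\mathbb S^1 \simeq \Gamma$ has zero variation, so $\var_\Gamma \Phi_\glob = 0$.

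Combining these facts, I would conclude
\begin{align*}
  \var_\Gamma \Phi_\rel = \var_\Gamma \Phi - \var_\Gamma \Phi_\glob = \var_\Gamma \Phi = \var_\Gamma \Theta = -2k\pi,
\end{align*}
where the penultimate equality is Lemma~\ref{rotation angle and rotation 1-form integral}(a) and the last is Eq.~\eqref{varTheta}. The main obstacle I anticipate is the careful bookkeeping in the middle step: one must verify that the decomposition $\Phi = \Phi_\rel + \Phi_\glob$ is consistent as $p$ moves along $\Gamma$ (i.e.\ that the entry and exit points $p_\mp \in S_\mp$ depend continuously on $v$ away from the discontinuity set), and that the complementary arcs genuinely avoid $\Pi$ except possibly along $X_J$-orbits contributing integer multiples of $2\pi$. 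Establishing that all the poles relevant to the \emph{nontrivial} part of the variation lie inside $B$—so that the global part contributes nothing to the variation—is exactly the content that makes monodromy \emph{local}, and is where the transversality hypothesis and Proposition~\ref{local transversality} do the essential work.
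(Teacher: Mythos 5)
Your proposal has two genuine gaps, one of scope and one of substance.

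First, the scope issue: the proposition is stated (see the setup of Eq.~\eqref{def-Phi-rel}) for a rotation $1$-form $\vartheta$ ``defined in $B \setminus \Pi$ but not necessarily defined globally.'' Your entire strategy is to write $\Phi = \Phi_\rel + \Phi_\glob$ and reduce to Lemma~\ref{rotation angle and rotation 1-form integral}(a); but when $\vartheta$ exists only near the focus-focus point, neither $\Phi$ nor $\Phi_\glob$ is defined, and the reduction collapses. This is not a technicality: the locality of the statement is exactly what Section~\ref{sec/scattering} later exploits for noncompact fibers, where no global torus bundle (hence no Theorem~\ref{thm/main}) is available. The paper's own remark after Proposition~\ref{local} runs the implication in the opposite direction --- it \emph{deduces} $\var_\Gamma \Phi_\glob = 0$ from the proposition, precisely because the proposition is proved by a self-contained local argument: one builds sections $\sigma_\pm$ of the pole-free solid tori $A_\pm = \EM^{-1}(D) \cap \partial B$, applies Stokes' theorem to the cylinder of \emph{local} arcs $\gamma_H^\rel$, uses that $\sigma_\pm(\Gamma)$ are null-homologous in $A_\pm$ (because they bound the disks $\sigma_\pm(D)$, the identification being over all of $D$, not just $\Gamma$), and finally identifies the resulting integer $\ell$ with $-k$ through the relation $\psi(\sigma_-(v)) = \varphi_J^{\Theta(v)}(\sigma_+(v))$.

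Second, even in the case where $\vartheta$ is global, your argument that $\var_\Gamma \Phi_\glob = 0$ fails as written. You claim $\Phi_\glob$ is a step function and invoke Example~\ref{ex:lc}. It is not: $\Phi_\glob$ is the integral of $\vartheta$ over an \emph{arc}, not a cycle, so it varies continuously in $v$ and is only continuous-with-jumps, not locally constant. (Contrast this with $\int_{\delta_H}\vartheta$ in the proof of Lemma~\ref{rotation angle and rotation 1-form integral}, which is genuinely locally constant because $\delta_H$ is a cycle and $\vartheta$ is closed.) For continuous-with-jumps functions the cancellation argument of Example~\ref{ex:lc} is false, and ``jumps are multiples of $2\pi$'' does not give zero variation --- the rotation number $\Theta$ itself has jumps $\pm 2\pi$ and variation $-2k\pi \neq 0$. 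A correct direct proof that $\var_\Gamma \Phi_\glob = 0$ would apply Stokes' theorem to the cylinder of \emph{global} arcs and use that both boundary curves $\sigma_\pm(\Gamma)$ are null-homologous in the pole-free sets $A_\pm$, so that $\int_{\sigma_\pm(\Gamma)}\vartheta = 0$; but at that point you are running the paper's argument on the complement of $B$ rather than on $B$, and you still need the global hypothesis, so nothing is gained over the paper's local proof.
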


\begin{proof}
  Let $\EM^{-1} (D) \cap \partial B = A_- \cup A_+$ where both $A_+$ and $A_-$ are isomorphic to the solid torus $D \times \mathbb S^1$. Choose a section $\sigma_+: D \to A_+$ and for each point $v \in D$ consider the orbit of $X_H$ through $p = \sigma_+(v)$. Such orbit intersects $A_-$ for the first time at a point $q$. We denote the orbit segment from $p$ to $q$ along the flow of $X_H$ by $\gamma_H^\glob$. This construction defines a section $\sigma_-: D \to A_-$ as the map that sends $v$ to $q$, see Figure~\ref{fig/prop-local}. The triviality of the cylinder bundle $\EM^{-1}(D) \setminus B \to D$ ensures that $\sigma_-$ is well defined. We denote by $[\mathbb S^1]$ the generator of $H_1(A_\pm,\mathbb Z)$ which can be represented by an orbit of the $\mathbb S^1$ action; we have $\int_{[\mathbb S^1]} \vartheta = 2\pi$. Consider now a simple closed path $\Gamma$ in $D$ surrounding the origin. In general, $\sigma_\pm(\Gamma)$ is homologous in $A_\pm$ to $\ell_\pm [\mathbb S^1]$ with $\ell_\pm \in \mathbb Z$. Because $\sigma_+(\Gamma)$ bounds the disk $\sigma_+(D)$ we have $\ell_+ = 0$ and because of the triviality of the cylinder bundle over $D$ we further have $\ell_- = 0$.

\begin{figure}
  \includegraphics[width=0.6\columnwidth]{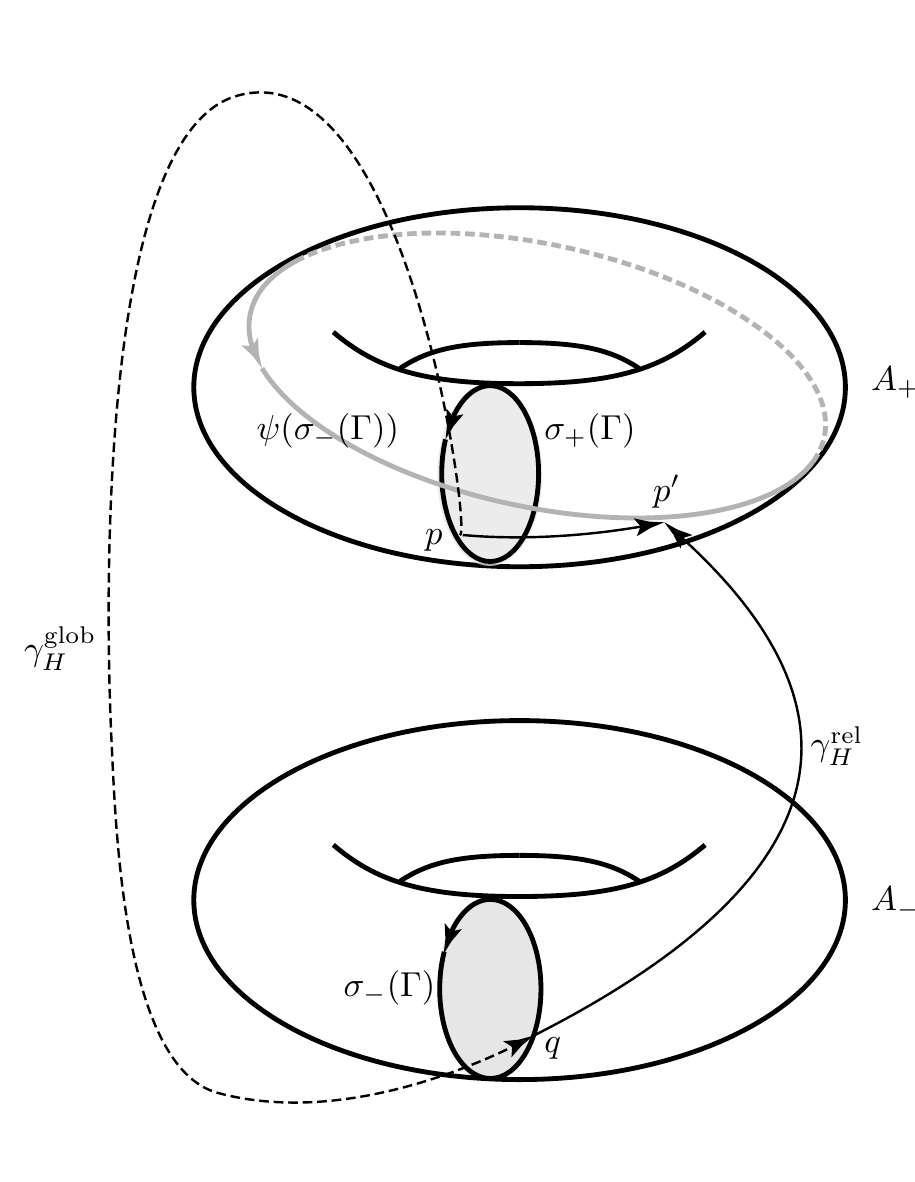}
  \caption{Schematic representation of the solid tori $A_+$ and $A_-$ in the proof of Proposition \ref{local}.}
  \label{fig/prop-local/a}
  \label{fig/prop-local}
\end{figure}

For each point $v \in \Gamma$ follow the flow of $X_H$ from $\sigma_-(v)$ until it reaches $A_+$ for the first time. This defines a map $\psi: \sigma_-(\Gamma) \to A_+$ sending $q$ to $p' = \psi(\sigma_-(v))$. Recall that $\gamma_H^\rel$ is the orbit segment from $q$ to $p'$ along the flow of $X_H$. Consider the cylinder $C$ made up of orbits $\gamma_H^\rel$ as $v$ moves along $\Gamma$. Stokes' theorem gives that $\int_{\partial C} \vartheta$ equals the sum of integrals around cycles $\delta_{ij}$ surrounding the poles of $\vartheta$ on $C$. The latter sum gives $-\var_\Gamma \Phi_\rel$, while $\partial C = - \psi(\sigma_-(\Gamma)) + \sigma_-(\Gamma)$. Therefore
\begin{align*}
  \var_\Gamma \Phi_\rel = \int_{\psi(\sigma_-(\Gamma))} \vartheta - \int_{\sigma_-(\Gamma)} \vartheta.
\end{align*}
Moreover, $\psi(\sigma_-(\Gamma))$ is homologous in $A_+$ to $\ell [\mathbb S^1]$ for some $\ell \in \mathbb Z$. Therefore, since $A_+$ and $A_-$ contain no poles (Proposition~\ref{local transversality}), we have
\begin{align*}
  \var_\Gamma \Phi_\rel = 2 \pi ( \ell - \ell_-) = 2 \ell \pi.
\end{align*}
From the definition of the rotation number it follows that
\begin{align*}
  \psi(\sigma_-(v)) = \varphi_J^{\Theta(v)} (\sigma_+(v)),
\end{align*}
where $\varphi_J^t$ is the time-$t$ flow of $X_J$.
The last relation implies that $\var_\Gamma \Theta = 2\pi(\ell - \ell_+) = 2 \ell \pi$. Therefore, $k = -\ell$ is the monodromy number and we obtain $\var_\Gamma \Phi_\rel = \var_\Gamma \Theta = 2 \ell \pi = - 2 k \pi$ thus concluding the proof.
\end{proof}

\begin{remark}
  If the rotation $1$-form can be extended over a saturated neighborhood of the singular fiber then we can define, in analogy with $\Phi_\rel$, the non-local part of $\Phi$ given by
  \begin{align*}
    \Phi_\glob(v) = \int_{\gamma_H^\glob(p)} \vartheta.
  \end{align*}
  Clearly, $\Phi = \Phi_\rel + \Phi_\glob$. Proposition~\ref{local} implies that $\var_\Gamma \Phi_\glob = 0$. Therefore any contributions to $\var_\Gamma \Phi$ from poles of $\vartheta$ away from $\bar p$ must cancel out. Recall that this situation occurs in the spherical pendulum, see Section~\ref{sec/spherical-pendulum}.
\end{remark}

\subsection{Computation of the Local Variation}\label{sec/1m1}

In the previous section we established that $\var_\Gamma \Phi_\rel = - 2 k \pi$ where $k$ is the monodromy number for the $\mathbb T^2$ bundle over $\Gamma$. In this section we make a specific choice of rotation 1-form near a focus-focus singular point and compute that $\var_\Gamma \Phi_\rel = 2\pi$, thus obtaining $k=-1$.

It is known \cite{Eliasson1984, Miranda2005} that up to using $J$ and possibly replacing $H$ with a function of $H,J$ (operation that does not change the fibration given by $F$), one can assume that in a neighborhood $V$ of a focus-focus singular point $\bar p$ the functions $H,J$ are, in appropriately chosen local symplectic coordinates, the functions
\begin{align}\label{eq/local-nf}
  H = q_1 q_2 - p_1 p_2 \quad \text{and} \quad
  J = \frac{1}{2}(q_1^2 + p_1^2) - \frac{1}{2}(q_2^2 + p_2^2).
\end{align}
with the standard symplectic structure $d p_1 \wedge dq_1 + d p_2 \wedge dq_2$. It can be shown that this is equivalent to the well-studied $A_1$ singularity \cite{Arnold2012}. 

We restrict our attention to an open ball
\begin{align*}
  B = \{ (q_1,p_1,q_2,p_2) \,:\, q_1^2 + q_2^2 + p_1^2 + p_2^2 < 2r \} \subseteq V,
\end{align*}
where $r > 0$ is fixed.
The ball $B$ is $X_J$-invariant but not $X_H$-invariant: for any point in $B$, except for points on the $2$-dimensional stable manifold, the corresponding $X_H$-orbit leaves $B$ in finite time.

In the ball $B$ we make the specific choice of rotation 1-form
\begin{align*}
  \vartheta := d\theta_1 = \frac{p_1 \dd q_1 - q_1 \dd p_1}{q_1^2 + p_1^2}.
\end{align*}

Fibers $\EM^{-1}(j,h) \cap \closure{B}$ for $j^2+h^2 \ne 0$ are diffeomorphic to cylinders $\setS^1 \times [-1,1]$ and the cylinder bundle over $\setR^2 \setminus \{0\}$ is trivial. This is Proposition~\ref{local fibration full space} but it can also be explicitly shown through the following parameterization, which is a symplectic modification of the one given in Ref.~\citenum{Bates2007a}. Specifically, we define the section $\sigma : \setR^2 \to \setR^4$ given by
\begin{align*}
  q_1 = \frac{j + 1}{\sqrt{2}},
  \quad
  p_1 = \frac{h}{\sqrt{2}},
  \quad
  q_2 = \frac{h}{\sqrt{2}},
  \quad
  p_2 = \frac{j - 1}{\sqrt{2}}.
\end{align*}
Note that the section $\sigma$ given here agrees with the section in Ref.~\citenum{Bates2007a} when $h^2+j^2=1$. Then a computation shows that $\sigma^* \omega = 0$, where $\omega$ is the canonical symplectic form in $\setR^4$. Therefore, $\sigma$ is Lagrangian. Furthermore,
\begin{align*}
  H(\sigma(j,h)) = h, \quad J(\sigma(j,h)) = j.
\end{align*}
Then the trivialization is given by
\begin{align*}
  \varphi(u,j,v,h) = \varphi_J^u \circ \varphi_H^v (\sigma(j,h)),
\end{align*}
where $u \in [0,2\pi)$, $v \in \setR$. Here $\varphi_J^t$ and $\varphi_H^t$ represent the time-$t$
flows of $X_J$ and $X_H$ respectively. 

In coordinates $(u,j,v,h)$ the symplectic form becomes
\begin{align*}
  \varphi^* \omega = \dd u \wedge \dd j + \dd v \wedge \dd h,
\end{align*}
while $\varphi^* H = h$, $\varphi^* J = j$.

We further define non-symplectic coordinates $(u,j,w,h)$ by $w = e^{2v} + j$. In the latter coordinates we have
\begin{align*}
  \dd\theta_1 = \dd u + \frac{h\,\dd w - w\,\dd h}{h^2 + w^2}.
\end{align*}

Consider now, in the image of $\EM$, the closed path $\Gamma$ given by
\begin{align*}
  \Gamma(s) = (j(s),h(s)) = (\ell \cos s, \ell sin s), \quad s \in [0,2\pi], \quad \ell > 0.
\end{align*}
The set of $X_H$ orbits
\begin{align*}
  C = \{ (u, j, w, h): u = 0,\, j = \ell \cos s,\, h = \ell \sin s \}
\end{align*}
is a cylinder in phase space containing one $X_H$ orbit (given by $u=0$) for each fiber of $\EM$. The form $\dd\theta_1$ has a pole at one point on $C$, given by $s = \pi$ (therefore, $h=0$, $j=-\ell$) and $w=0$.

\begin{figure}
  \centering
  \hfil\includegraphics[width=0.35\columnwidth]{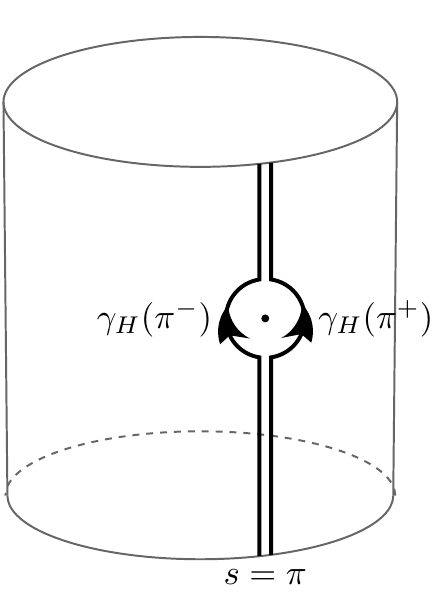}\hfil
  \includegraphics[width=0.35\columnwidth]{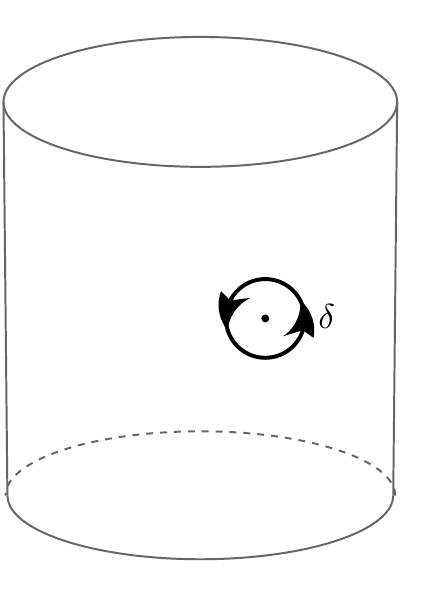}\hfil
  \caption{The variation of the rotation number can be expressed as the integral
    of $\dd\theta_1$ around a single point on the cylinder $C$.}
  \label{fig:vari-rota-numb}
\end{figure}

The relative rotation number $\Phi_\rel$ is discontinuous at the pole and the variation $\var_\Gamma \Phi_\rel$ of the rotation number along $\Gamma$ is the opposite of the size of the discontinuity of $\Phi_\rel$ at $s = \pi$. If we denote the integral curve of $X_H$ on $C$ by $\gamma_H(s)$ then
\begin{align*}
  \var_\Gamma \Phi_\rel = \lim_{s \to \pi^-} \int_{\gamma_H(s)} \dd\theta_1 - \lim_{s \to \pi^+} \int_{\gamma_H(s)} \dd\theta_1.
\end{align*}
Using the fact that $\dd\theta_1$ is closed we can then express $\var_\Gamma \Phi_\rel$ as
\begin{align*}
  \var_\Gamma \Phi_\rel = - \int_\delta \dd\theta_1,
\end{align*}
where $\delta$ is a positively oriented closed path on $C$ winding once around the pole. Using coordinates $(w,h)$ in a neighborhood of the pole $C$ we find that
\begin{align*}
  \dd\theta_1|_C = \frac{h \,\dd w - w\,\dd h}{h^2 + w^2}.
\end{align*}
The choice of the coordinates as $(w,h)$ is so that the orientation is the same
as the choice $(s,w)$ used in defining the positive direction for
$\delta$. Then
\begin{align*}
  \int_\delta \frac{h \,\dd w - w\,\dd h}{h^2 + w^2} = - 2\pi,
\end{align*}
which gives
\begin{align*}
  \var_\Gamma \Phi_\rel = 2 \pi,
\end{align*}
and therefore the monodromy number is $k=-1$.

\begin{remark}
  In the local trivialization we could have chosen $\vartheta = \dd u$ as a rotation $1$-form. Such a choice would give $\Phi_\rel = 0$ and thus $\var_\Gamma \Phi_\rel = 0$ thus contradicting our result. Nevertheless, one can check that $\dd u $ is not transversal to $\EM$ and therefore Theorem \ref{thm/main} is not applicable with this choice of rotation $1$-form.
\end{remark}

\section{Noncompact Fibrations and Scattering Monodromy}
\label{sec/scattering}
In the previous sections we have been considering integrable Hamiltonian systems with an $\mathbb S^1$ action and connected, compact fibers. In this section we discuss the case of systems with noncompact fibers and we show how the local considerations in Section~\ref{sec/local} lead to a definition of monodromy for such systems. This noncompact monodromy is then compared to the notion of scattering monodromy.

\subsection{Definition of Noncompact Monodromy}

We assume that the integral map $\EM$ is $\mathbb S^1$ invariant and that the fibers are connected but noncompact while the flow of $X_H$ is complete. Under these assumptions, the regular fibers of $\EM$ are cylinders $\mathbb S^1 \times \mathbb R$.

Such systems can be claimed to ``have no monodromy'' in the following sense, see Ref.~\citenum{Bates2007a}. Consider a simple closed path $\Gamma$ in the set of regular values of $\EM$. Then $\EM^{-1}(\Gamma) \overset{\EM}{\longrightarrow} \Gamma$ is an orientable cylinder bundle over $\Gamma$ and it is thus trivial, that is, isomorphic to the bundle $\mathbb S^1 \times (\mathbb S^1 \times \mathbb R) \overset{\mathrm{pr_1}}{\longrightarrow} \mathbb S^1$.

Expanding on the concept of scattering monodromy, introduced in Ref.~\citenum{Bates2007a}, we propose to define noncompact monodromy by appropriately identifying the two ``ends'' of the cylinder fibers of $\EM$, turning the cylinder bundle over $\Gamma$ into a torus bundle. We show that our definition of noncompact monodromy, which is based on topological considerations, matches scattering monodromy for the particular system studied in Ref.~\citenum{Bates2007a} and we explain in detail how the two concepts are related.

First, we construct a torus bundle starting from $\EM^{-1}(\Gamma) \overset{\EM}{\longrightarrow} \Gamma$, through the following procedure. Recall that we consider an $\mathbb S^1$ invariant integral map $\EM$, with noncompact, connected, fibers. Let $\Gamma$ be a simple closed path in the set $\mathcal R$ of regular values of $\EM$, bounding a disk $D \subseteq \overline {\mathcal R}$. Denote by $f$ the $\mathbb S^1$-reduced integral map, that is $\EM = f \circ \rho$, where $\rho$ is the reduction map of the $\mathbb S^1$ action.

We make the assumption that for each $v \in D$ the fiber $f^{-1}(v) = \EM^{-1}(v) / \mathbb S^1$ is homeomorphic to $\mathbb R$. Then there is a homeomorphism $g: \EM^{-1}(D) / \mathbb S^1 \to D \times \mathbb R$ such that $\mathrm{pr}_1 \circ g = f$, that is, the bundles $\EM^{-1}(D) / \mathbb S^1 \overset{f}{\longrightarrow} D$ and $D \times \mathbb R \overset{\mathrm{pr}_1}{\longrightarrow} D$ are topologically isomorphic. Let $B_m = g^{-1}(D \times [-m,m]) \subset f^{-1}(D)$, $m > 0$. We have $\partial B_m = N_{+m} \cup N_{-m}$ with $N_{\pm m}$ homeomorphic to $D$. Let $A_{\pm m} = \rho^{-1}(N_{\pm m})$. Assuming that the $\mathbb S^1$ action has a finite number of fixed points, we can choose $m$ large enough so that all such fixed points in $\EM^{-1}(D)$ are contained in $\rho^{-1}(B_m)$. Thus, each set $A_{\pm m}$ is homeomorphic to a solid torus. Consider continuous sections
\begin{align*}
  \sigma_{\pm m} : D \simeq N_{\pm m} \to A_{\pm m}
\end{align*}
of the principal $\mathbb S^1$ bundle $A_{\pm m} \overset{f}{\longrightarrow} D$. Then define homeomorphisms
\begin{align*}
  h_{\pm m} : D \times \mathbb S^1 \to A_{\pm m} : (v,t) \mapsto \varphi_J^t(\sigma_{\pm m}(v)),
\end{align*}
which in turn allow the definition of the identification map
\begin{align*}
  \eta_m : A_{+m} \to A_{-m} : p \mapsto \eta_m (p) = h_{-m} \circ (h_{+m})^{-1}(p).
\end{align*}
Consider now the space
\begin{align*}
  C_m = \rho^{-1}(B_m) / \sim_{\eta_m},
\end{align*}
obtained by identifying points at the boundary $A_{+m} \cup A_{-m}$ of $\rho^{-1}(B_m)$ through the map $\eta_m$. This construction turns $\rho^{-1}(B_m) \subset \EM^{-1}(D)$ into a closed topological manifold without boundary. Finally, define
\begin{align*}
  T_m = (\rho^{-1}(B_m) \cap \EM^{-1}(\Gamma)) / \sim_{\eta_m} = \rho^{-1}(g^{-1}(\Gamma \times [-m,m])) / \sim_{\eta_m}.
\end{align*}
The space $T_m$ is a bundle of tori over $\Gamma$ with the projection map given by $\EM$. Note that $\EM$ is well-defined on $T_m$ since $\EM \circ \eta_m = \EM$. We will denote the bundle $T_m \overset{\EM}{\longrightarrow} \Gamma$ simply by $T_m$. With this construction we can now give the following definition of noncompact monodromy.

\begin{definition}\label{noncompact}
  The \emph{noncompact monodromy} of the cylinder bundle $\EM^{-1}(\Gamma) \overset{\EM}{\longrightarrow} \Gamma$ is the monodromy of the torus bundle $T_m$ if there is $M > 0$ such that the monodromy of $T_m$ is constant for all $m > M$.
\end{definition}

\begin{remark}
  In the construction of the torus bundle above we assumed that the identification of the ends of the ``cut'' cylinders is done over the whole disk $D$ bounded by $\Gamma$. This is essential for defining the torus bundle $T_m$ uniquely (up to isotopy). If the identification is given only over $\Gamma$ then there is enough freedom to construct torus bundles with arbitrary monodromy number. Note that the identification of $A_+$ and $A_-$ over the whole $D$, and not only over $\Gamma$, also plays an essential role in the proof of Proposition~\ref{local}.
\end{remark}

In the construction of the torus bundle $T_m$ one can define, in analogy with our local description in Section~\ref{sec/local}, and the definition of $\Phi^\rel$,
\begin{align*}
  \Phi_m(v) = \int_{\gamma_H(v)} \vartheta,
\end{align*}
where $\vartheta$ is any rotation $1$-form and $\gamma_H(v)$ is an orbit segment of $X_H$ going from $A_{-m}$ to $A_{+m}$. Then we define 
\begin{align}\label{eq/def-rotation-number-scattering}
  \Phi(v) = \lim_{m \to \infty} \Phi_m(v),
\end{align}
if the latter limit exists.

Moreover, one can define the rotation number $\Theta(v)$ in the following way. Let $q = \sigma_{-m}(v)$ and consider the $X_H$ orbit that starts at $q$ end ends at a point $p'=\psi_m(q) = \psi_m(\sigma_{-m}(v))$ in $A_{+m}$. Furthermore, let $p = \eta_m^{-1}(q) = \sigma_{+m}(v)$. Then define $\Theta_m(v)$ by
\begin{align*}
  \psi_m(\sigma_{-m}(v))=\varphi_J^{\Theta_m(v)}(\sigma_{+m}(v)),
\end{align*}
and, finally, the rotation number is defined by
\begin{align}\label{eq/def-geom-rotation-number-scattering}
  \Theta(v) = \lim_{m \to \infty} \Theta_m(v),
\end{align}
if the latter limit exists.

\subsection{Comparison to Scattering Monodromy}

We now focus on the integrable Hamiltonian system given by $\EM$ in Eq.~\eqref{eq/local-nf}, but with $\EM$ now defined over $\mathbb R^4$, and not only in a neighborhood of the origin as in Section~\ref{sec/1m1}. Note that in this case the regular fibers of $\EM$ are cylinders. Moreover, this is precisely the system studied in Ref.~\citenum{Bates2007a} up to assigning coordinates $(q_1,p_1,q_2,p_2)$ to $(x_1,x_2,y_1,-y_2)$ respectively.

\begin{proposition}
  The noncompact monodromy of $\EM$ in Eq.~\eqref{eq/local-nf} is the local monodromy determined in Section~\ref{sec/1m1} and it thus has monodromy number $k=-1$.
\end{proposition}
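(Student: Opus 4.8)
The plan is to verify that the compactification of Definition~\ref{noncompact}, applied to the globally defined normal form~\eqref{eq/local-nf} on $\setR^4$, reproduces the local torus bundle of Section~\ref{sec/local} and hence the monodromy computation of Section~\ref{sec/1m1}. First I would check the hypotheses of the construction: $H$ and $J$ are quadratic, so $X_H$ is linear with complete flow; the $\setS^1$-action generated by $X_J$ has the single fixed point $\bar p=0$; and for each regular value $(j,h)\neq(0,0)$ the fiber $\EM^{-1}(j,h)$ is a cylinder whose $\setS^1$-reduction is homeomorphic to $\setR$. Since the normal form is exact on all of $\setR^4$, the Lagrangian section $\sigma$, the trivialization $\varphi(u,j,v,h)=\varphi_J^u\circ\varphi_H^v(\sigma(j,h))$, and the rotation $1$-form $\vartheta=\dd\theta_1$ of Section~\ref{sec/1m1} are all available on the whole regular set. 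I would use them to produce, for each large $m$, the cut solid tori $A_{\pm m}$, the sections $\sigma_{\pm m}$ over the disk $D$ bounded by $\Gamma$, the identification $\eta_m$, and the torus bundle $T_m\overset{\EM}{\to}\Gamma$.

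Next I would relate the monodromy of $T_m$ to a variation of $\vartheta$. The gluing $\eta_m$ closes each cut cylinder into a torus exactly as in the proof of Proposition~\ref{local}; applying that argument with $A_{\pm m}$ in the role of $A_\pm$ and $\Phi_m$ in the role of $\Phi_\rel$ gives $\var_\Gamma\Phi_m=-2k_m\pi$, where $k_m$ is the monodromy number of $T_m$. Moreover, the Stokes step in that proof expresses $\var_\Gamma\Phi_m$ as the sum of the residues of $\vartheta$ at the poles enclosed by the cylinder of orbit segments; consequently this variation depends only on those poles and not on the particular sections used to form $\eta_m$.

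It then remains to evaluate $\var_\Gamma\Phi_m$, which is the computation already performed in Section~\ref{sec/1m1}. Taking $\Gamma(s)=(\ell\cos s,\ell\sin s)$ and the cylinder $C=\{u=0,\ j=\ell\cos s,\ h=\ell\sin s\}$, and taking for definiteness the cuts at flow time $v=\pm m$ (permissible by the section-independence just noted), the coordinates $u,j,h$ are constant along each orbit segment while $w=e^{2v}+j$ runs from $w_-=e^{-2m}+j$ to $w_+=e^{2m}+j$; with $\dd\theta_1=\dd u+(h\,\dd w-w\,\dd h)/(h^2+w^2)$ this gives $\Phi_m(s)=\arctan(w_+/h)-\arctan(w_-/h)$. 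The only pole of $\dd\theta_1$ on $C$ lies at $s=\pi$, $w=0$, since $\EM(\Pi)=\{h=0,\ j\le 0\}$ meets $\Gamma$ only there. For $m$ large enough that $w_-<0<w_+$ at $s=\pi$, the orbit segment sweeps across this pole precisely as $h$ changes sign, forcing a jump of $\Phi_m$ equal to $-2\pi$, while at $s=0$ the value $j=\ell>0$ keeps $w$ bounded away from $0$ and no jump occurs. Hence $\var_\Gamma\Phi_m=2\pi$ for all such $m$, so $k_m=-1$ is independent of $m$; the noncompact monodromy of Definition~\ref{noncompact} is well defined and equals the local value $k=-1$.

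The step I expect to be the main obstacle is not the residue evaluation but the justification that the gluing $\eta_m$ yields a torus bundle whose monodromy is genuinely $-\tfrac1{2\pi}\var_\Gamma\Phi_m$. The delicate point, stressed in the remark following Definition~\ref{noncompact}, is that the identifying sections must be chosen over the entire disk $D$ and not merely over $\Gamma$; this is what forces $\ell_\pm=0$ in the Proposition~\ref{local} argument and pins $k_m$ to the single enclosed residue. Here the focus-focus geometry is essential: $\rho(\bar p)$ is a fixed point of the reduced flow of $X_H$, so the orbit through $\sigma(\bar v)$ is asymptotic to it and sweeps out only one of the two branches of the singular reduced fiber. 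The naive flow-time trivialization, valid over $D\setminus\{\bar v\}$, therefore degenerates at $\bar v$, and the obstruction to extending it across $D$ is exactly the pole at $s=\pi$, $w=0$ whose residue $-2\pi$ carries the monodromy. Once this is settled, stabilization in $m$ is immediate, since enlarging $m$ beyond $\tfrac12|\ln\ell|$ only lengthens the orbit segments into pole-free regions and leaves the jump at $s=\pi$ unchanged.
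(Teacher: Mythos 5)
Your overall route is the same as the paper's: transplant the argument of Proposition~\ref{local} to the glued bundles $T_m$, evaluate $\var_\Gamma\Phi_m$ by the residue computation of Section~\ref{sec/1m1}, and observe stabilization in $m$. The gap is exactly at the step you declare ``permissible by the section-independence just noted''. Cutting at flow time $v=\pm m$ is \emph{not} an admissible choice of $B_m$ in Definition~\ref{noncompact}: the region $\{|v|\le m\}$ never contains the fixed point $\bar p=0$ for any $m$ (the flow-time chart sweeps only one branch of the singular fiber, as you note yourself in your closing paragraph), and its boundary $A_{-m}=\{w=e^{-2m}+j\}$ meets the polar set of $\dd\theta_1$ (which sits at $w=h=0$) in the orbit over $(j,h)=(-e^{-2m},0)\in D$ as soon as $e^{-2m}<\ell$. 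These are precisely the hypotheses the transplanted argument needs: the proof of Proposition~\ref{local} converts $\int_{\sigma_-(\Gamma)}\vartheta$ into $2\pi\ell_-$ \emph{using that $A_\pm$ contain no poles} (Proposition~\ref{local transversality}). For your cuts this conversion is false: $\sigma_{-m}(\Gamma)$ bounds the disk $\sigma_{-m}(D)$ in $A_{-m}$, so $\ell_-=0$, and yet $\int_{\sigma_{-m}(\Gamma)}\vartheta=-2\pi$ because that disk contains a pole. Concretely, with flow-time cuts the gluing $\eta_m$ identifies $(u,v=+m)$ with $(u,v=-m)$ in the global chart $(u,j,v,h)$, so the glued bundle is literally $\Gamma\times\setT^2$ with $\Theta_m\equiv 0$: its monodromy is $k_m=0$, while your (correct) value is $\var_\Gamma\Phi_m=2\pi$. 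Hence the identity $\var_\Gamma\Phi_m=-2k_m\pi$, on which your whole deduction rests, fails for the cuts you use; invariance of the \emph{number} $\var_\Gamma\Phi_m$ does not carry the monodromy identity over to inadmissible cuts, and taken at face value your argument would equally ``prove'' $k=-1$ for a gluing whose monodromy is $0$. You flag this as the main obstacle but then misidentify the delicate condition: ``sections defined over all of $D$'' is not enough (the flow-time sections are defined over all of $D$); what is needed is that the cut region contain the fixed point and all poles over $D$, so that the boundary tori are pole-free.

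The repair keeps your structure but uses admissible cuts throughout: take $B_m$ as in Definition~\ref{noncompact} with $m$ large enough that $\rho^{-1}(B_m)$ contains both the fixed point and the compact set $\Pi\cap F^{-1}(D)=\{q_1=p_1=0,\; q_2^2+p_2^2<2\ell\}$ (possible since the sets $B_m$ exhaust $F^{-1}(D)/\setS^1$). Then $A_{\pm m}$ are pole-free, the Proposition~\ref{local} argument legitimately yields $\var_\Gamma\Phi_m=-2k_m\pi$, and since $w=e^{2v}+j$ is strictly increasing along each orbit segment while its endpoint values straddle $w=0$ on the fiber over $j=-\ell$, $h=0$, the cylinder of orbit segments crosses the single polar orbit exactly once; your residue evaluation then gives $\var_\Gamma\Phi_m=2\pi$, hence $k_m=-1$ for all large $m$. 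This corrected argument is in substance the paper's (much shorter) proof, which simply compares the construction of $T_m$ with the proof of Proposition~\ref{local} and quotes the computation of Section~\ref{sec/1m1}.
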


\begin{proof}
  Comparing the proof of monodromy in Proposition~\ref{local} for the bundle $\EM^{-1}(\Gamma) \overset{\EM}{\longrightarrow}\Gamma$ with the construction of the torus bundle $T_m$, one sees that for finite $m > 0$ the two bundles have the same monodromy. Since the monodromy of $T_m$ is the same for all $m > 0$ the noncompact monodromy is well-defined and equal to the local monodromy.
\end{proof}

Further note that the noncompact monodromy defined here coincides with the scattering monodromy introduced in Ref.~\citenum{Bates2007a}. This is not a coincidence; instead, it sheds light to an aspect of the definition of scattering monodromy, that is, an implicit compactification of the cylinder bundle in Ref.~\citenum{Bates2007a}.

Let us recall the definition of scattering monodromy from Ref.~\citenum{Bates2007a}. The projections of the integral curves of $X_H$ in this system have well defined asymptotic directions in the $(q_1,p_1)$-plane as $t \to \pm\infty$. The angle between these two directions on the fiber $\EM^{-1}(v)$ is given by the \emph{scattering angle} $\Theta_s(v)$ which is shown to equal $\arg(j+ih)$, where $v=(h,j)$. It follows that $\var_\Gamma \Theta_s = 2\pi$ and therefore one gets a non-trivial variation which is reminiscent of the variation of the rotation number in compact monodromy. The scattering angle can also be obtained by integrating the form
\begin{align*}
  \vartheta_s = \frac{q_1 \, dp_1 - p_1 \, dq_1}{q_1^2 + p_1^2},  
\end{align*}
along an integral curve of $X_H$, or its projection in the $(q_1,p_1)$-plane. Note that, following our terminology, $\vartheta_s$ is a rotation $1$-form in $\mathbb R^4 \setminus \{ q_1=p_1=0 \}$, cf.~Definition~\ref{defrotform}.

Comparing definitions one directly sees that, choosing the rotation $1$-form to be $\vartheta=\vartheta_s$, we have $\Phi(v) = \Theta(v) = \Theta_s(v)$, where $\Phi(v)$ and $\Theta(v)$ are given in Eqs.~\eqref{eq/def-rotation-number-scattering} and \eqref{eq/def-geom-rotation-number-scattering} respectively. Moreover, note that for any $m > 0$ we have, using the result of Section~\ref{sec/1m1}, that
\begin{align*}
  \var_\Gamma \Phi_m = 2\pi.
\end{align*}
Therefore, we also have
\begin{align*}
  \var_\Gamma \Phi  = 2\pi,
\end{align*}
in accordance with $\var_\Gamma \Theta_s = 2\pi$ obtained in Ref.~\citenum{Bates2007a}.

Note that there are two aspects of this story. One computational, where we define an \emph{ad hoc} rotation number, and one topological, where we identify the two ends of the cylinders in the bundle $\EM^{-1}(\Gamma)$. The computational considerations show that scattering monodromy is precisely the local monodromy considered in Section~\ref{sec/local}. The construction of the torus bundle $T_m$ described earlier and culminating to Definition~\ref{noncompact} of noncompact monodromy gives a topological interpretation of this computation.

In particular, for the construction of the torus bundle $T_m$ for $\EM$ in Eq.~\eqref{eq/local-nf} we define the sections $\sigma_{\pm m}: D \to A_{\pm m}$ by
\begin{align*}
  \sigma_{\pm m}(h,j) &= (q_1+ip_1, q_2+ip_2) \\
               &=
  \Biggl( \bigl[ (j^2 + h^2 + m^2)^{1/2} + j \bigr]^{1/2},
  \biggl[ \frac{(j^2 + h^2 + m^2)^{1/2} - j}{h^2 + m^2} \biggr]^{1/2} (h \mp im) 
  \Biggr).
\end{align*}
Note that for large values of $m$, $m^2 \gg h^2 + j^2$, the sections are asymptotically equal to
\begin{align*}
  \sigma_{\pm m}(h,j) = (q_1, p_1, q_2, p_2) \simeq \left( \sqrt{m}, 0, 0, \mp \sqrt{m} \right).
\end{align*}
This choice corresponds exactly to asymptotic motion along the $q_1$ axis while the sign of $p_2 = -X_H(q_1) = \pm \sqrt{m}$ signifies incoming (for `$+$') or outgoing (for `$-$') motions. Using the terminology of Ref.~\citenum{Bates2007a}, incoming asymptotic motions are ``negative ends'' and outgoing asymptotic motions are ``positive ends'' of integral curves of $X_H$. From this point of view, the map $\eta_m$ for large $m$ identifies negative ends with positive ends, that is, it identifies incoming and outgoing motions that are asymptotically along the same direction in the $(q_1,p_1)$-plane.

Recall from Section~\ref{sec/lvam} that $\Phi_m^\rel(v)$ measures the time to go along the flow of $X_J$ from $\sigma_{+m}(v)$ to $\psi(\sigma_{-m}(v))$. In scattering monodromy, $\Theta_s(v)$ measures the time to go along the flow of $X_J$ from the ``parallel transport of the negative end of the integral curve'' to the ``positive end of the integral curve''. The parallel transport defined in Ref.~\citenum{Bates2007a} provides an identification of ``positive ends'' and ``negative ends'' that extends over all fibers, including the singular one. This demonstrates that the computationally defined scattering monodromy is indeed the monodromy of a properly defined torus bundle.

A different, but closely related to Ref.~\citenum{Bates2007a}, definition of scattering monodromy is given in Ref.~\citenum{Dullin2008}. A \emph{deflection angle} $\Delta\phi$ is defined as the difference between the asymptotic directions of two outgoing classical trajectories: one for a free particle (that is, without scattering) and one scattered under a repulsive potential corresponding to Hamiltonian function
\begin{align*}
  H(q,p)=\frac12(p_1^2+p_2^2)-\frac12(q_1^2+q_2^2),
\end{align*}
invariant under the circle action generated by the angular momentum $J=q_1p_2-q_2p_1$. We note here that a linear symplectic coordinate change brings this system to the form given in Eq.~\eqref{eq/local-nf}. It is then shown in Ref.~\citenum{Dullin2008} that for a closed path $\Gamma$ around the origin in the $(H,J)$-plane the map $\chi: \Gamma \simeq \mathbb S^1 \to \mathbb S^1$ which assigns to each point on $\Gamma$ the corresponding deflection angle $\Delta\phi$ has degree $1$, and this observation is characterized as scattering monodromy. The correspondence between the result in Ref.~\citenum{Dullin2008} and the noncompact monodromy defined here is straightforward. The identification of the two ``ends'' of the fibers of $F$ (typically, cylinders) is given by the flow of the reference Hamiltonian $H_0=\tfrac12(p_1^2+p_2^2)$ corresponding to free motion. Such identification extends continuously to the singular fiber $F^{-1}(0)$ since the flow of $H_0$ has no fixed points. Given such identification the map $\chi$ coincides with the map $\Theta$ defined in Eq.~\eqref{eq/def-geom-rotation-number-scattering} and as the proof of Proposition~\ref{local} shows, the degree of $\Theta$ determines the monodromy of the corresponding torus bundle.
  
The results obtained in the present paper show that the descriptions of scattering monodromy in Ref.~\citenum{Bates2007a} and in Ref.~\citenum{Dullin2008} are ultimately equivalent. The difference between the two approaches boils down to a different identification of incoming and outgoing asymptotic directions. Since both identifications extend continuously inside the disk bounded by the closed path $\Gamma$ we conclude that they define the same monodromy which coincides with the \emph{noncompact monodromy} as given by Definition~\ref{noncompact}.

The main difference between our approach and the one in Refs.~\citenum{Bates2007a} and \citenum{Dullin2008} is the following. We treat the choice of the rotation 1-form and the choice of identification of the cylinder ends as independent. In particular, the choice of rotation 1-form determines the function $\Phi(v)$ while the choice of identification determines the function $\Theta(v)$; it turns out that $\var_\Gamma \Theta = \var_\Gamma \Phi$. Refs. \citenum{Bates2007a} and \citenum{Dullin2008} implicitly give an identification of cylinder ends and then choose to integrate a rotation 1-form for which $\Phi(v)=\Theta(v)$.

\section{Discussion}
\label{sec/discussion}

Given a 2DOF integrable system with a circle symmetry, we introduced the concept of rotation 1-form. A rotation 1-form is a closed 1-form which measures the displacement with respect to the flow of the circle action (just as a connection 1-form for principal bundles). This displacement is strictly related to the classical \emph{rotation number}, whose multivaluedness gives the integer that is commonly referred to as the \emph{monodromy}. In presence of singularities of the integrable system, the rotation 1-form is necessarily undefined in some submanifold of poles in the phase space. It is precisely a residue-like formula around such poles that gives the monodromy.

Under the hypothesis that the system has a simple focus-focus singularity, the computation turns out to give the monodromy when confined in a neighborhood of the singularity. This allows us to define monodromy also in the case of non-compact fibration and to show that it extends previous notions of scattering monodromy.

Being associated to general closed 1-forms with poles, the sum of the residues is not necessarily an integer multiple of $2\pi$. Our idea appears to be suitable to generalizations to more general types of monodromy, such as fractional monodromy. The presence of threads of singularities makes the analysis more complicated, and requires a deeper investigation, that might involve Picard-Lefschetz formulas.

\section*{Acknowledgments}

This work was partially supported by the Jiangsu University Natural Science Research Program grant 13KJB110026. The authors would like to thank the Xi'an Jiaotong-Liverpool University and the University of Catania, where part of this research took place, for their hospitality. We acknowledge support from the National Natural Science Foundation of China (Grant no. 61502132), ANR-DFG research program Explosys (ANR-14-CE35-0013-01; DFG-Gl 203/9-1), Mexican UNAM project PREI 2016-II, and Franco-Mexican project LAISLA.

We thank the referee for the constructive remarks which helped clarify the presentation of the results.

\bibliographystyle{plain}
\bibliography{local-monodromy}

\end{document}